\DeclareMathAlphabet{\pazocal}{OMS}{zplm}{m}{n}
\let\bbordermatrix\bordermatrix
\patchcmd{\bbordermatrix}{8.75}{4.75}{}{}
\patchcmd{\bbordermatrix}{\left(}{\left[}{}{}
\patchcmd{\bbordermatrix}{\right)}{\right]}{}{}
\newcommand{\sr}{\stackrel}
\newcommand{\rar}{\rightarrow}
\newcommand{\tri}{\sr{\triangle}{=}}
\newcommand{\be}{\begin{equation}}
\newcommand{\ee}{\end{equation}}
\newcommand{\bea}{\begin{eqnarray}}
\newcommand{\eea}{\end{eqnarray}}
\newcommand{\bes}{\begin{eqnarray*}}
\newcommand{\ees}{\end{eqnarray*}}
\newcommand{\bce}{\begin{center}}
\newcommand{\ece}{\end{center}}
\newcommand{\beae}{\begin{IEEEeqnarray}{rCl}}
\newcommand{\eeae}{\end{IEEEeqnarray}}
\def\VR{\kern-\arraycolsep\strut\vrule &\kern-\arraycolsep}
\def\vr{\kern-\arraycolsep & \kern-\arraycolsep}
\newcommand{\ben}{\begin{enumerate}}
\newcommand{\een}{\end{enumerate}}
\newcommand{\hso}{\hspace{.1in}}
\newcommand{\hst}{\hspace{.2in}}
\newtheorem{theorem}{Theorem}[section]
\newtheorem{remark}{Remark}[section]
\newtheorem{corollary}{Corollary}[section]
\newtheorem{assumptions}{Assumptions}[section]
\newtheorem{proposition}{Proposition}[section]
\newtheorem{observation}{Observation}[section]
\begin{document}

\title{On the Fragile Rates of  Linear Feedback Coding Schemes of Gaussian  Channels with Memory}

 \author{
   \IEEEauthorblockN{
     Charalambos D. Charalambous\IEEEauthorrefmark{1} and   
    Christos Kourtellaris\IEEEauthorrefmark{1} and 
   Themistoklis Charalambous\IEEEauthorrefmark{2}
     \\}
   \IEEEauthorblockA{
     \IEEEauthorrefmark{1}Department of Electrical and Computer Engineering\\University of Cyprus, Cyprus\\ 
%   75 Kallipoleos Avenue, P.O. Box 20537, Nicosia, 1678, Cyprus
%     \\
  }
  \IEEEauthorblockA{
    \IEEEauthorrefmark{2}Department of Electrical Engineering and Automation, School of Electrical Engineering\\ Aalto University,  Finland}\\
    Emails: chadcha@ucy.ac.cy, kourtellaris.christos@ucy.ac.cy,
     themistoklis.charalambous@aalto.fi
 }

\maketitle

% ===============================================
%
%
% ABSTRACT
%
%
% ===============================================
\begin{abstract}
%In the above mentioned paper \cite{butman1976},
%\footnote{S. Butman,  Linear Feedback Rate Bounds for Regressive Channels,   IEEE Transactions on Information Theory, vol.22, no.3, pp.363-366, 1976}, 
In \cite{butman1976} the  linear coding scheme is applied,    $X_t =g_t\Big(\Theta - {\bf E}\Big\{\Theta\Big|Y^{t-1}, V_0=v_0\Big\}\Big)$, $t=2,\ldots,n$,  $X_1=g_1\Theta$, with $\Theta: \Omega \rar {\mathbb R}$, a   Gaussian random variable, to  derive  a lower bound on the feedback rate, for    additive Gaussian noise (AGN) channels, $Y_t=X_t+V_t, t=1, \ldots, n$, where $V_t$ is a  Gaussian  autoregressive (AR)  noise, and  $\kappa \in [0,\infty)$ is the total transmitter power.  For the unit memory AR noise, with parameters $(c, K_W)$, where $c\in [-1,1]$ is the pole and $K_W$ is the variance of the Gaussian noise,  the lower bound is $C^{L,B} =\frac{1}{2} \log \chi^2$, where   $\chi =\lim_{n\longrightarrow \infty} \chi_n$ is the positive root of $\chi^2=1+\Big(1+ \frac{|c|}{\chi}\Big)^2 \frac{\kappa}{K_W}$, and the sequence $\chi_n \tri \Big|\frac{g_n}{g_{n-1}}\Big|, n=2, 3, \ldots,$ satisfies a certain  recursion, and   conjectured that $C^{L,B}$ is the feedback capacity.  The conjectured is proved  in \cite{kim2010}. 
%\CDC{An independent derivation of $C^{L,B}$ is provided by  Wolfowitz \cite{wolfowitz1975}, by transmitting one of $e^{nR}$ digital messages, using operational definitions of rates.} 

In this correspondence, it is  observed that the  nontrivial  lower bound $C^{L,B}=\frac{1}{2} \log \chi^2$  such that   $\chi >1$,  necessarily implies  the  scaling coefficients of the feedback  code,   $g_n$, $n=1,2, \ldots$, grow  unbounded, in the sense that, $\lim_{n\longrightarrow\infty}|g_n| =+\infty$. The unbounded behaviour of $g_n$ follows from the ratio limit theorem of a sequence of real numbers, and   it is verified  by simulations. It is then concluded that such  linear codes are not practical, and  fragile with respect  to  a mismatch between the statistics of the mathematical model of the channel and the real statistics of the channel. In particular, if the error is perturbed by $\epsilon_n>0$ no matter  how small, then $X_n =g_t\Big(\Theta - {\bf E}\Big\{\Theta\Big|Y^{t-1}, V_0=v_0\Big\}\Big)+g_n \epsilon_n$, and $|g_n|\epsilon_n \longrightarrow \infty$, as $n \longrightarrow \infty$. 
\end{abstract}

% ===============================================
%
%
% INTRODUCTION
%
%
% ===============================================
 \section{Introduction, Main Results,  Literature  Review and Observations}
\label{intro}
%The feedback and nonfeedback capacity of the AGN channel, when the noise $V^n$ is stable, stationary, or asymptotically stationary,  can be considered to have been explained sufficiently in information theory \cite{gallager1968,cover-thomas2006}. 
Achievable lower and upper bounds on feedback rates of additive Gaussian (AGN) channels with memory, driven by autoregressive AR noise,  are derived   in the   early 1970's,  in \cite{butman1969,tienan-schalkwijk1974,wolfowitz1975,butman1976},    using generalizations of Elias  \cite{elias1961}, and Schalkwijk and Kailath \cite{schalkwijk-kailath1966},   coding schemes of memoryless AGN channel.  Bounds are also derived in \cite{dembo1989,ozarow1990}, and compared to Butman's bounds \cite{butman1976}. Variations of the coding schemes   \cite{elias1961,schalkwijk-kailath1966}, are applied to memoryless AGN channel with feedback,  in the context of joint source channel-coding, using posterior matching feedback schemes  in \cite{shayevitz-feder2007,shayevitz-feder2008,shayevitz-feder2009}. \\
In  \cite{yang-kavcic-tatikonda2007},  the  ``maximal information rate'' of the AGN channel with unit memory stationary AR noise is computed   (see Corollary 7.1 in \cite{yang-kavcic-tatikonda2007}), and noted  it is identical to Butman's lower bound. In  \cite{kim2010},  Butman's lower bound is shown to correspond to the feedback capacity of the AGN channel with unit memory stationary AR noise, while  additional generalizations are also obtained for  stationary autoregressive moving average noise. 

%In the more recent literature, such as,  \cite{yang-kavcic-tatikonda2007,kim2010,liu-han2017,liu-han2019,gattami2019,li-elia2019}, achievable rates are derived, and related to Butman's conjecture \cite[Abstract]{kim2010}.  

In this paper, we identify fundamental {\it fragile} properties of the  linear feedback coding scheme applied in \cite{butman1976} to derive  the   lower bound on feedback capacity.  To keep our analysis and observations as simple as possible, our  discussion of \cite{butman1976}   is focused on AGN channels driven by the simplest noise with memory, the autoregressive AR unit-memory Gaussian noise.    However, our observations are not limited by the simplicity of the noise. 
%Concluding observations are presented in Remark~\ref{rem_but}.

%\CDC{
%We should mention that our observations are consistent with a similar  {\it fragile} behaviour,  shown by  Derpich, Muller and Ostergaard in \cite{derpich-muller-ostergaard2015} (see Section~VII.(c)), for  the coding schemes  applied in  \cite{kim2010}, to    achieve Butman's lower bound $C^{L,B}(\kappa)$, and to characterize feedback capacity of more general AGN  channels in \cite[Theorem~4.1 and Theorem~6.1]{kim2010}.  
%}
% ===============================================
% Additive Gaussian Noise Channels Driven by Nonstationary Autoregressive Noise
% ===============================================
\subsection{Additive Gaussian Noise Channels Driven by  Autoregressive Noise}
Bounds on the feedback capacity of  AGN channels are derived\footnote{In \cite{tienan-schalkwijk1974,butman1976,wolfowitz1975} the noise is  time-invariant, stable or marginally stable.} in  Tienan's and Schalkwijk's 1974 paper \cite{tienan-schalkwijk1974}, Wolfowitz's 1975 paper \cite{wolfowitz1975}  and Butman's 1976 paper \cite{butman1976}, {where the authors presuppose  the initial state of the noise  is} known to the encoder and the decoder\footnote{ \cite[page 311 below the noise model]{tienan-schalkwijk1974}, where $z_{-m+1},\ldots, z_0$ is used, and rest of pages where rates are conditioned on ${\cal Z}^m$;  \cite[Section~I, second pagagraph]{wolfowitz1975},   ``$z_0$ is the state of the channel at the beginning of transmission; $z_0$  is known to both sender and receiver'';    \cite[eqn (17)]{butman1976}, where $n_0$ is used.}. 

Below, we introduce the AGN channel driven by time-varying AR noise,  with respect to Butman's \cite{butman1976}    linear time-varying feedback coding scheme, as shown in  Figure~\ref{fig:system}. This generalization is considered to keep our  presentation more interesting, and to verify via an alternative derivation,   that Butman's lower bound on achievable feedback rate, also  holds for the more general nonstationary and nonergodic AGN channels investigated by  Cover and Pombra in \cite{cover-pombra1989} (even though we show the coefficients of the error of  coding  scheme grow unbounded).    \\
Butman in \cite{butman1976} considers the restriction $c_t=c \in [-1,1], K_{W_t}=K_W\in (0,\infty),\forall t $ (i.e.,  includes nonasymptotically stationary noise),  while  Tienan and Schalkwijk's, and also Wolfowitz    \cite{tienan-schalkwijk1974,wolfowitz1975} consider  the restriction
$c_t=c \in (-1,1), K_{W_t}=K_W\in (0,\infty),\forall t $ (i.e., asymptotically stationary noise).

{\it AGN Driven by Time-Varying  AR Noise AR$(c_t;v_0)$.}
\begin{align}
&Y_t=X_t+V_t,  \hst t=1, \ldots, n, \label{b_ar_1} \\
&\frac{1}{n} {\bf E} \Big\{\sum_{t=1}^n\big(X_t\big)^2\Big|V_0=v_0\Big\} \leq \kappa, \hst  \kappa \in [0,\infty), \label{b_ar_2} \\
&V_t= c_t V_{t-1}+ W_t, \hso V_0=v_0,   \hso c_t \in (-\infty, \infty),  \hso t=1, \ldots, n,  \label{b_ar_3}\\
&W_t\in G(0,K_{W_t}),\: K_{W_t}\in (0,\infty),  \hso  t=1, \ldots, n,    \hso \mbox{indep. Gaussian seq., indep. of $V_0\in G(0, K_{V_0})$}, \label{b_ar_4} \\
& X_t=e_t(v_0,
\Theta, Y^{t-1})), \hso \mbox{$e_t(\cdot)$ is linear  in $(v_0,\Theta, Y^{t-1})$},\hso  t=2, \ldots, n, \label{b_ar_6}\\
&X_1= e_1(v_0,\Theta),  \hso \mbox{$e_1(\cdot)$ is linear  in $(v_0,\Theta)$}, \label{b_ar_7}\\
%&V_t= c V_{t-1}+ W_t, \hso V_0=v_0,  \hso t=1, \ldots, n,  \label{b_ar_3}\\
& \Theta : \Omega \rar {\mathbb R} \hso \mbox{is a Gaussian message},  \hso   \Theta \in G(0,K_\Theta), \: K_\Theta >0,\\
&W_t, \hso t=1, \ldots, n, \hso V_0, \hso \Theta \hso \mbox{are mutually independent}    \label{b_ar_5_a}
\end{align}
where the RVs $X_t$, $Y_t$ and $V_t$ are defined  as follows. \\
$X^{n} \tri \{X_1, X_2, \ldots, X_n\}$ is  the sequence of channel input random variables (RVs) $X_t :  \Omega \rar {\mathbb R}$,   \\
$Y^{n} \tri \{Y_1, Y_2, \ldots, Y_n \}$ is  the sequence of channel output RVs $Y_t :  \Omega \rar {\mathbb R}$,\\
   $V^n\tri \{ V_1, \ldots, V_n\}$ is the  sequence of  jointly Gaussian distributed  RVs $V_t :  \Omega \rar {\mathbb R}$, for fixed $V_0=v_0$,  \\
 $V_0\tri v_0$, is the initial state of the channel, i.e., of the noise, known to the encoder and decoder,   \\
 $G(0, K_X)$ denotes a Gaussian  distribution induced by a Gaussian RV, $X: \Omega \rar {\mathbb R}$,  with zero mean  and variance $K_X$. %The above RVs are defined on some probability space $(\Omega, {\cal F}, {\mathbb P})$. 
% ${\bf E}_{v_0}\{\cdot\}$ means the  expectation is  taken for fixed initial state  $V_0=v_0$, and the distribution depends on $e(\cdot)$. 

\begin{figure}[ht]
    \centering
    \includegraphics[width=0.6\columnwidth]{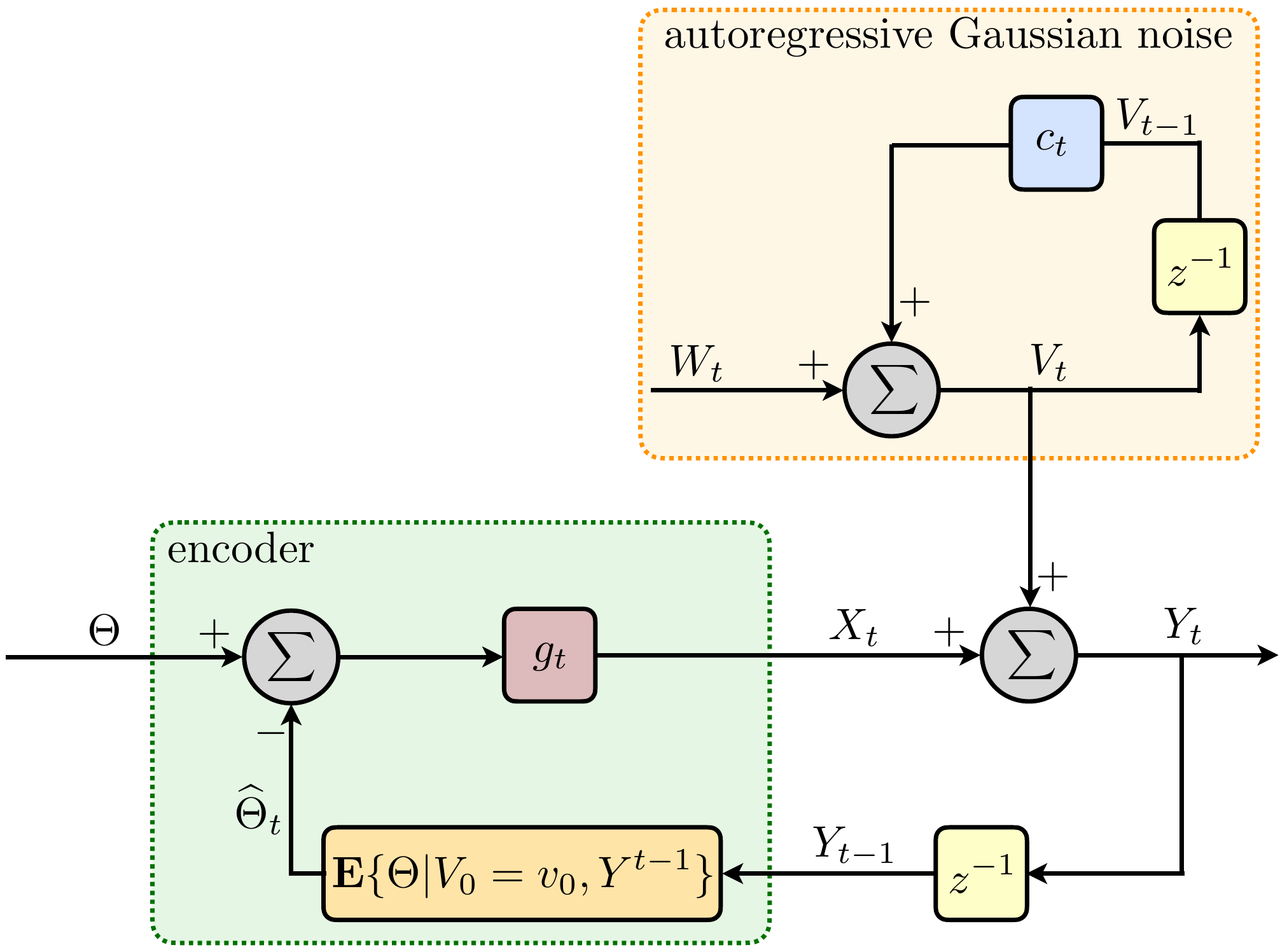}
    \caption{Mathematical model of the AGN channel driven by a time-varying AR$(c_t;v_0), c_t \in (-\infty, \infty), K_{W_t}>0\; \forall t$ noise, with Butman's linear feedback coding scheme. The model in \cite{butman1976} corresponds to the restriction AR$(c;v_0), K_{W_t}=K_W \forall t$. $g_n, n=1,2, \ldots$ is  gain sequence multiplying the  estimation error. }
    \label{fig:system}
\end{figure}

Throughout the paper, we  use the following notation.\\
 AR$(c_t;v_0)$,  denotes the  time-varying autoregressive unit memory noise $V^n$, with  $c_t \in (-\infty, \infty), K_{W_t} \in (0,\infty),  t=1,\ldots, n$,   AR$(c;v_o)$  denotes  its restriction to  time-invariant,  with  $c_t=c \in (-\infty, \infty), K_{W_t}=K_W \in (0,\infty),  t=1,\ldots, n$.  The stable AR$(c;v_0)$ noise corresponds to $c\in (-1,1)$.  \\
{A time-varying AR noise without an initial state is denoted by AR$(c_t), c_t\in (-\infty,\infty)$, and corresponds to the case, the RV $V_0$ generates  the trivial information, i.e.,  the sigma$-$algebra generated by $V_0$ is  $\sigma\{V_0\}=\{\emptyset, \Omega\}$.}\\
$I(\Theta;Y^n|v_0)$  denotes  the mutual information between the Gaussian message $\Theta$ and the sequence $Y^n$ conditioned on $V_0=v_0$, and 
$I(\Theta;Y^n)$ denotes  the mutual information between $\Theta$ and the sequence $Y^n$.\\ From the above formulation,  follows that,  if  $V_0$ generates the trivial information, then $I(\Theta;Y^n|V_0)=I(\Theta;Y^n)$; in this case $I(\Theta;Y^n)$ depends on the distribution of the RV $V_1$, i.e., ${\bf P}_{V_1}$ and $V_1=W_1$.
 
Butman in the 1976 paper \cite{butman1976}, considered (\ref{b_ar_1})-(\ref{b_ar_5_a}), for AR$(c;v_0), c \in [-1,1]$, i.e., $c_t =c \; \forall t$, $K_\Theta=1$, and  derived achievable feedback  rates, based on   the optimization problem 
\begin{align}
C_{n}^L(\kappa, v_0) \tri \sup_{X_t=e_t(v_0, \Theta, Y^{t-1}), \; e_t(\cdot) \: \:  \mbox{is linear}, \:\:    t=1, \ldots, n: \:    \frac{1}{n}  {\bf E} \big\{\sum_{t=1}^{n} \big(X_t\big)^2\Big|V_0=v_0\big\} \leq \kappa} \frac{1}{n}I(\Theta;Y^{n}|v_0)\label{ir} , 
\end{align}
and its per unit time limit, 
\begin{align}
C^L(\kappa, v_0) = \lim_{n \longrightarrow \infty} C_{n}^L(\kappa,v_0)\label{ir_a} 
\end{align}
provided the supremum and limit exist.    In particular,   Butman proved  that linear strategies $e_t(\cdot), t=1, \ldots, n$ are given by \cite[eqn(17)]{butman1976},
\begin{align}
&e_t(v_0, \Theta, Y^{t-1})= g_t\Big(\Theta - {\bf E}\Big\{\Theta\Big|Y^{t-1}, V_0=v_0\Big\}\Big), \hst t=2,\ldots,n, \hst e_1(v_0,\Theta)=g_1\Theta  \label{butman_cs_1_new}
 \end{align}
where $g_1, g_2, \ldots, g_n$ is a sequence of nonrandom real numbers. Hence, the supremum in (\ref{ir}) is replaced by the supremum over the  sequence  $g_1, g_2, \ldots, g_n$ that satisfies the average power constraint.

\begin{remark} The AGN Channel Driven by Noise without Initial State\\
 The variation of the above AGN channel without an  initial state, follows directly from the Tienan and Schalkwijk  \cite{tienan-schalkwijk1974} and    Butman \cite{butman1976} formulation, by letting   $V_0$ generate  the trivial information, i.e.,  the sigma$-$algebra generated by $V_0$ is  $\sigma\{V_0\}=\{\emptyset, \Omega\}$. In such a  formulation  $C_{n}^L(\kappa, v_0), C_{n}^L(\kappa, v_0)$ are replaced by $C_{n}^L(\kappa, {\bf P}_{V_1}), C_{n}^L(\kappa, {\bf P}_{V_1})$, which emphasize their dependence on the distribution ${\bf P}_{V_1}$ of $V_1$, instead of initial state $V_0=v_0$.
\end{remark}

 \cite{butman1976} is focused on  the   optimization problem (\ref{ir})  and its per unit time limit (\ref{ir_a}), with coding scheme  (\ref{butman_cs_1_new}),  and in particular  on the derivation of upper and lower  bounds. { It  will become apparent (in subsequent sections) that Butman's lower bound is derived using a per-symbol average power constraint at each transmission time, i.e.,  ${\bf E} \Big\{\big(X_n\big)^2\Big|V_0=v_0\Big\} \leq \kappa, n=1,2, \ldots$,     and not    $\frac{1}{n}  {\bf E} \big\{\sum_{t=1}^{n} \big(X_t\big)^2\Big|V_0=v_0\big\} \leq \kappa$. } \\Over the years,  the following  result, is  used extensively in the literature, such as, \cite{liu-han2019,li-elia2019}. 

\ben
\item[(R)] Butman's  lower bound on $C_n^L(\kappa, v_0)$ and $C^L(\kappa, v_0)$, and Butman's Conjecture, that these bounds  correspond to the feedback capacity \cite[Abstract]{butman1976}. 
\een
{In particular,  \cite{kim2010} proved   that Butman's Conjecture is correct, and the frequency and time-domain characterizations of feedback capacity\footnote{These theorems are used in \cite{kim2010} to obtain Butman's lower bound, and to validate the Conjecture.} \cite[Theorem~4.1 and Theorem~6.1]{kim2010},  reproduce Butman's lower bound on feedback capacity.}

% ===============================================
% Main Results of the Paper
% ===============================================
\subsection{Main Results on the Linear Code of \cite{butman1976}}
\label{sect:mr}

We prove the following conclusion. 
\begin{description}

%\item[(C1)] Butman's lower bounds on $C_n^L(\kappa, v_0)$ and $C^L(\kappa, v_0)$, when  linear coding strategy (\ref{butman_cs_1_new}) is applied,   are not optimal.  Rather, for any finite $n$, we produce a sequence $g_1, \ldots, g_n$ that incurs a higher value than Butman's lower bound.  As a consequence of this contradiction,  Butman's conjecture must be false. 

\item[(C)] Butman's \cite[Abstract]{butman1976} calculation of the rate   $ \lim_{n \longrightarrow \infty} C_{n}^L(\kappa,v_0)$ {for the AR$(c;v_0), c \in [-\infty, \infty]$ noise}, known as Butman's  lower  bound and conjecture on feedback capacity, based on coding scheme (\ref{butman_cs_1_new}),    corresponds to an unbounded   sequence $g_1, g_2, \ldots$,  in the sense  that,  $\lim_{n \longrightarrow \infty}|g_n| =+\infty$. 
%On the other hand, of  {and moreover $ \lim_{n \longrightarrow \infty} C_{n}^L(\kappa,v_0)=0, \forall \kappa \in [0,\infty)$}.  %concerning  the  in the limit 
%\item[(C3)]  Kim's  \cite[Theorem~6.1 and Lemma~6.1]{kim2010}  do not characterize feedback capacity. 
\end{description}

Theorem~\ref{thm:ip} presents some of  the consequences of  (C).  To prove (C) we will make use of  Theorem~\ref{thm:bg_0}  (below), known as the ratio test theorem \cite[Theorem 3.34]{Book_analysis:1976}.

\begin{theorem}[The Ratio Test] \ \\
\label{thm:bg_0}
Consider any  sequence of real numbers $\{a_n: n=1,2, \ldots \}$.\\
(a) Suppose that $\lim_{n\longrightarrow \infty} \left| {a_{n+1}}/{a_n} \right| = L$. If $L<1$, then the series $\sum_{n=0}^\infty {a_n}$ converges absolutely, if $L>1$ the series diverges, and if $L=1$ this test gives no information.\\
(b) If $\lim_{n\longrightarrow \infty} \left| {a_{n+1}}/{a_n} \right|  >1$, then $\lim_{n\longrightarrow\infty}|a_n| =+\infty$;  if  $\lim_{n\longrightarrow \infty} \left| {a_{n+1}}/{a_n} \right| <1$, then $\lim_{n\longrightarrow\infty}|a_n| =0$.  
\end{theorem}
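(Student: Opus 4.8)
The plan is to derive all four assertions from a single idea: a limit of the ratios $|a_{n+1}/a_n|$ that is strictly separated from $1$ forces the tail of the sequence to be dominated by (or to dominate) a geometric sequence, after which the elementary facts $\beta^k\to 0$ for $0<\beta<1$ and $\beta^k\to+\infty$ for $\beta>1$ do the rest. I would prove part (b) first, because part (a) then follows with essentially no extra work.

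For the decay statement in (b), suppose $L:=\lim_{n\to\infty}|a_{n+1}/a_n|<1$ and fix any $\beta$ with $L<\beta<1$. By definition of the limit there is an index $N$ with $|a_{n+1}/a_n|<\beta$ for all $n\ge N$; in particular $a_n\ne 0$ for $n\ge N$, so a one-line induction gives $|a_{N+k}|<\beta^{k}|a_N|$ for all $k\ge 0$, and since $\beta^k\to 0$ we get $|a_n|\to 0$. For the growth statement, suppose $\lim_{n\to\infty}|a_{n+1}/a_n|>1$ and fix $\beta$ with $1<\beta<\lim_{n\to\infty}|a_{n+1}/a_n|$. Then there is $N$ with $|a_{n+1}/a_n|>\beta$ for $n\ge N$, whence $|a_{N+k}|>\beta^{k}|a_N|$ with $|a_N|>0$, and $\beta^k\to+\infty$ forces $|a_n|\to+\infty$. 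This proves (b).

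For part (a) with $L<1$, the same chain $|a_{N+k}|<\beta^{k}|a_N|$ combined with the comparison test yields $\sum_{n\ge N}|a_n|\le |a_N|\sum_{k\ge 0}\beta^{k}=|a_N|/(1-\beta)<\infty$, so $\sum_n a_n$ converges absolutely. For $L>1$, the growth half of (b) already gives $|a_n|\to+\infty$, so in particular the general term does not tend to $0$ and $\sum_n a_n$ diverges. For $L=1$ I would exhibit the two textbook examples $a_n=1/n$ and $a_n=1/n^2$: both satisfy $|a_{n+1}/a_n|=\bigl(n/(n+1)\bigr)^p\to 1$ (with $p=1$ and $p=2$ respectively), yet the first series diverges and the second converges, showing the test is inconclusive.

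I do not expect any genuine obstacle; the only point that deserves a moment's attention is the implicit assumption that $a_n$ is eventually nonzero, which is exactly what makes the ratios meaningful and is automatically supplied together with the index $N$ coming from the limit hypothesis. Beyond that, the argument is nothing more than the geometric-series comparison together with the two limits $\beta^k\to 0$ and $\beta^k\to+\infty$; the only reason to state (b) explicitly alongside (a) is that it is precisely the $|a_n|\to+\infty$ conclusion, not mere divergence of a series, that will be invoked when analyzing the gain sequence $g_n$.
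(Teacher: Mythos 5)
Your proof is correct and complete. The paper does not prove this statement at all --- it simply cites it as Theorem 3.34 of Rudin's \emph{Principles of Mathematical Analysis} --- and your argument (geometric domination of the tail once the ratios are separated from $1$, plus the comparison test and the $1/n$ versus $1/n^2$ examples for the $L=1$ case) is precisely the standard textbook proof being referenced, including the correct observation that the hypothesis already forces $a_n\neq 0$ eventually so that $|a_N|>0$ in the growth half of part (b).
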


%\begin{corollary}
%If $L >1$, then $\lim_{n\to\infty}|a_n| =\infty$. 
%\end{corollary}
%\begin{proof}
%The proof is trivial, so we give a sketch here for completeness. Inequality $|a_{n+1}|>|a_n|$ can be written as  $|a_{n+1}|\geq r|a_n|$, where $r>1$, for some large $n$. Therefore, $|a_{n+k}|\geq r^{k-1}|a_{n+1}|\geq r^{k}|a_n|$. Therefore, the terms in the sequence $|a_{n}|, |a_{n+1}|, \ldots$ are larger than the terms in the geometric sequence $|a_{n}|, r|a_{n}|, r^2|a_{n}|, \ldots$. As a consequence,
%\begin{align*}
%\lim_{n\to \infty} \left| {a_{n+1}}/{a_n} \right|  = \infty.
%\end{align*}
%Since the convergence of a sequence does not depend on the initial condition, it follows that the original sequence $a_1, a_2, \ldots$ also diverges.
%\end{proof}
%Obviously, if  (C1) is true then follows (C2) is  also true.  However, to gain additional insight, besides proving (C1), we also identify  certain technical issues that show Butman's formulation is ill-posed. 
Further, {as mentioned earlier},  to  provide   additional insight, we present  an alternative derivation of Butman's lower bound $C_{n}^L(\kappa,v_0)$ for any $n=1,2, \ldots$,  as stated in  Theorem~\ref{thm_bg_1} (below), which is also valid for the  AR$(c_t;v_0), c_t \in (-\infty, \infty)$ noise.

\begin{theorem}[Characterizations of $C_n^L(\kappa, v_0)$ and $C_n^L(\kappa, {\bf P}_{V_1})$] \ \\
\label{thm_bg_1}
Consider the AGN channel defined by
 (\ref{b_ar_1})-(\ref{b_ar_5_a}), {i.e., with AR$(c_t;v_0)$ or AR$(c_t)$, for any  $c_t \in (-\infty, \infty), t=0,1,\ldots$.}   \\
(a) Total Average Power Constraint.  The maximization over all linear coding schemes,  (\ref{butman_cs_1_new}) (without initial state, i.e., for AR$(c_t)$ noise) of mutual information $I(\Theta; Y^n)$,  subject to total average power, $\frac{1}{n}{\bf E}\big\{\sum_{t=1}^n \big(X_t\big)^2\big\}\leq \kappa$,    is given  by 
\begin{align}
&  C_n^{L}(\kappa, {\bf P}_{V_1}) \tri  \sup_{g_t, \: t=1, \ldots, n: \: \: \frac{1}{n}\big\{g_1^2 K_\Theta+ \sum_{t=2}^n g_t^2 \Sigma_{t-1}  \big\} \leq \kappa} \frac{1}{2}\Big\{ \log \Big( \frac{g_1^2 K_\Theta+K_{V_1}}{K_{V_1}}\Big) \nonumber \\
&\hst \hst\hst\hst +\sum_{t=2}^n \log \Big( \frac{\Big(1-c_t \frac{g_{t-1}}{g_t}\Big)^2g_t^2 \Sigma_{t-1} +K_{W_t}}{K_{W_t}}  \Big)\Big\} , \\
 &\mbox{subject to $\Sigma_t, t=1, \ldots, n$ that satisfies the recursion  and controlled by $g_1, \ldots, g_n$,} \nonumber \\
 &\Sigma_t= \frac{K_{W_t}\Sigma_{t-1}}{\Big(g_t-c_t g_{t-1}\Big)^2 \Sigma_{t-1} +K_{W_t}}, \hst t=2, \ldots, n, \label{inn_b_5_in}   \\
&\Sigma_1= \frac{K_{\Theta}K_{V_1}}{g_1^2 K_{\Theta}+K_{V_1}}. \label{inn_b_6_in} 
\end{align}
(b) Pointwise  Average Power Constraint.  The maximization over all linear coding schemes, (\ref{butman_cs_1_new})    (for AR$(c_t)$ noise)   of mutual information $I(\Theta; Y^n)$,  subject to pointwise average power, ${\bf E}\big\{\big(X_t\big)^2\big\}\leq \kappa_t, t=1, \ldots, n$,    is given  by 
\begin{align}
&  C_n^{L}(\kappa_1, \ldots, \kappa_n, {\bf P}_{V_1}) \tri  \sup_{g_t, \: t=1, \ldots, n: \: \: g_1^2 K_\Theta\leq \kappa_1, \:g_t^2 \Sigma_{t-1}  \leq \kappa_t, \:t=2, \ldots, n} \frac{1}{2}\Big\{ \log \Big( \frac{g_1^2 K_\Theta+K_{V_1}}{K_{V_1}}\Big) \nonumber \\
&\hst \hst\hst\hst \hst\hst+\sum_{t=2}^n \log \Big( \frac{\Big(1-c_t \frac{g_{t-1}}{g_t}\Big)^2g_t^2 \Sigma_{t-1} +K_{W_t}}{K_{W_t}}  \Big)\Big\} , \\
 &\mbox{subject to $\Sigma_t, t=1, \ldots, n$ that  satisfies recursion (\ref{inn_b_5_in}), (\ref{inn_b_6_in}) and controlled by $g_1, \ldots, g_n$.} 
\end{align}
(c) The statements of parts (a) and (b) hold, when the noise is replaced by AR$(c_t;v_0)$, with  $V_0=v_0$  the initial state known to the encoder, and with $K_{V_1}$ replaced by $K_{W_1}$, and 
$C_n^{L}(\kappa, {\bf P}_{V_1})=C_n^{L}(\kappa,v_0), C_n^{L}(\kappa_1, \ldots, \kappa_n, {\bf P}_{V_1})=C_n^{L}(\kappa_1, \ldots, \kappa_n, v_0)$.
\end{theorem}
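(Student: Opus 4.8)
The plan is to fix an arbitrary admissible gain sequence $g_1,\dots,g_n$, reduce the mutual information $I(\Theta;Y^n)$ to a telescoping sum of per-symbol terms via the chain rule, evaluate each term in closed form using joint Gaussianity, read off the power expenditure and the error-covariance recursion, and finally take the supremum over the $g$-sequences that satisfy the relevant power constraint; parts (a) and (b) then differ only in that last constraint set, and part (c) follows by conditioning everything on $V_0=v_0$.

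First I would set up the estimator side. Let $\widehat{\Theta}_t\triangleq {\bf E}\{\Theta|Y^t\}$ (with $\widehat{\Theta}_0={\bf E}\{\Theta\}=0$ by \eqref{b_ar_5_a}), let $E_t\triangleq \Theta-\widehat{\Theta}_{t-1}$ be the estimation error appearing in the code \eqref{butman_cs_1_new}, and let $\Sigma_{t-1}\triangleq{\bf E}\{E_t^2\}$, so $\Sigma_0=K_\Theta$ and $E_1=\Theta$. Because $(\Theta,V^n,Y^n)$ is jointly Gaussian, each $\Sigma_{t-1}$ is a deterministic number depending only on $g_1,\dots,g_{t-1}$ and the noise parameters --- this is exactly the assertion that the recursion \eqref{inn_b_5_in}--\eqref{inn_b_6_in} is ``controlled by $g_1,\dots,g_n$''. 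Since $X_1=g_1\Theta$ and $X_t=g_tE_t$ for $t\ge2$, the identities ${\bf E}\{X_1^2\}=g_1^2K_\Theta$ and ${\bf E}\{X_t^2\}=g_t^2\Sigma_{t-1}$ are immediate, which produces the constraint sets $\tfrac1n\{g_1^2K_\Theta+\sum_{t=2}^n g_t^2\Sigma_{t-1}\}\le\kappa$ in (a) and $g_1^2K_\Theta\le\kappa_1$, $g_t^2\Sigma_{t-1}\le\kappa_t$ in (b).

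Next, by the chain rule $I(\Theta;Y^n)=\sum_{t=1}^n\big(h(Y_t|Y^{t-1})-h(Y_t|Y^{t-1},\Theta)\big)$. For the conditional term, given $(\Theta,Y^{t-1})$ both $X_t=g_tE_t$ and $V_{t-1}=Y_{t-1}-X_{t-1}$ are deterministic, so $Y_t=X_t+c_tV_{t-1}+W_t$ has only the fresh innovation $W_t$ as its source of randomness, giving $h(Y_t|Y^{t-1},\Theta)=\tfrac12\log(2\pi e\,K_{W_t})$ for $t\ge2$ and $h(Y_1|\Theta)=\tfrac12\log(2\pi e\,K_{V_1})$. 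For the unconditioned term I would establish the structural identity: writing $V_{t-1}=Y_{t-1}-g_{t-1}E_{t-1}$ and $E_{t-1}=E_t+(\widehat{\Theta}_{t-1}-\widehat{\Theta}_{t-2})$ gives, for $t\ge2$,
\[
Y_t=(g_t-c_tg_{t-1})\,E_t+W_t+\phi_t(Y^{t-1}),
\]
where $\phi_t$ is a function that is constant once $Y^{t-1}$ is fixed; since, conditionally on $Y^{t-1}$, $E_t$ is zero-mean with variance $\Sigma_{t-1}$ (deterministic, by Gaussianity) and independent of $W_t$, this yields $h(Y_t|Y^{t-1})=\tfrac12\log\big(2\pi e\,[(g_t-c_tg_{t-1})^2\Sigma_{t-1}+K_{W_t}]\big)$, and $h(Y_1)=\tfrac12\log(2\pi e\,[g_1^2K_\Theta+K_{V_1}])$. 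Subtracting the two families of entropies, and using $(g_t-c_tg_{t-1})^2=(1-c_tg_{t-1}/g_t)^2g_t^2$, reproduces the summands in the stated objective.

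Finally, the same structural identity shows that, conditionally on $Y^{t-1}$, $Y_t$ is a scalar Gaussian observation of $\Theta$ with gain $(g_t-c_tg_{t-1})$ and additive noise variance $K_{W_t}$; the scalar Kalman/MMSE update then gives $\Sigma_t^{-1}=\Sigma_{t-1}^{-1}+(g_t-c_tg_{t-1})^2/K_{W_t}$, which is \eqref{inn_b_5_in}, and the analogous update for $Y_1=g_1\Theta+V_1$ gives \eqref{inn_b_6_in}. Taking the supremum over gain sequences meeting the power budget establishes (a) and (b). For (c), every expectation and entropy above is taken conditionally on $V_0=v_0$: the only change is that $V_1=c_1v_0+W_1$ now has conditional variance $K_{W_1}$, which replaces $K_{V_1}$, and the ``frozen'' term $\phi_t$ additionally depends on $v_0$ without affecting any variance; hence $C_n^L(\kappa,v_0)=C_n^L(\kappa,{\bf P}_{V_1})$ and likewise in the pointwise case. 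The main obstacle is the structural identity for $Y_t$ given $Y^{t-1}$: one must verify that, after conditioning, the dependence of $Y_t$ on the entire past collapses onto the single Gaussian $E_t$ plus an additive constant --- this hinges on expressing the noise state through $V_{t-1}=Y_{t-1}-g_{t-1}E_{t-1}$ and on the deterministic nature of the Kalman error variance --- together with handling the base case $t=2$ (where $E_1=\Theta$ and $\widehat{\Theta}_0=0$) and carefully tracking which quantities are held fixed by the conditioning.
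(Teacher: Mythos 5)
Your proposal is correct and follows essentially the same route as the paper: the key step in both is the innovations/structural identity $Y_t=(g_t-c_tg_{t-1})(\Theta-\widehat{\Theta}_{t-1})+W_t+\phi_t(Y^{t-1})$ obtained by eliminating $V_{t-1}=Y_{t-1}-X_{t-1}$, followed by the scalar Kalman update for $\Sigma_t$ and a per-step Gaussian entropy evaluation. The only cosmetic differences are that you use the chain rule $\sum_t\big(h(Y_t|Y^{t-1})-h(Y_t|Y^{t-1},\Theta)\big)$ where the paper writes $H(Y^n)-H(V^n)=\sum_t H(I_t)-H(V_1)-\sum_{t\ge 2}H(W_t)$, and you state the covariance recursion in information form $\Sigma_t^{-1}=\Sigma_{t-1}^{-1}+(g_t-c_tg_{t-1})^2/K_{W_t}$, both of which are equivalent to the paper's versions.
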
 

\begin{proof}
The proof is given  in Section~\ref{lite}. 
\end{proof}

In Section~\ref{sect:linear_code},   we show that Butman's lower  bound of the AR unit memory noise, corresponds to  parameters $K_{W_t}=K_W, c_t=c \in [-1,1], t=1,2 \ldots$, Gaussian message $\Theta \in G(0,1)$, i.e, $K_{\Theta}=1$,  and it is  obtained from Theorem~\ref{thm_bg_1}.(c),  i.e., $C_n^{L}(\kappa_1, \ldots, \kappa_n, v_0)$,   by invoking  Butman's strategy \cite[4 lines below eqn(19)]{butman1976}, 
$sgn(g_n) =-sgn(c g_{n-1}),  n=2,3$.

\subsection{Observations on the  Lower Bound  on Feedback Capacity of \cite{butman1976,wolfowitz1975}  }
\label{sect:linear_code}
%Lower bounds on optimization problems $C_n^L(\kappa, v_0)$ and $C^L(\kappa, v_0)$ are derived  by  Wolfowitz in the 1975 paper \cite{wolfowitz1975} and Butman in the 1976 paper \cite{butman1976},  for an  autoregressive limited memory stable  noise, under the assumption  the initial state of the noise  is known to the encoder and the decoder (see footnote 1).  \\
%Optimization problems $C_n^L(\kappa, v_0)$ and $C^L(\kappa, v_0)$ are also addressed  by Tienan and Schalkwijk in the 1974 paper \cite{tienan-schalkwijk1974}, but instead   upper bounds on achievable feedback rates are derived, for an  autoregressive limited memory stable noise, under the assumption the initial  state of the noise  is known to the encoder and the decoder (see footnote 1). \\
Now,  we turn our attention to  Butman's assumptions, formulation, and the  main steps of  the derivation of the lower bound   in  \cite{butman1976}, to verify our observation and claims. It is noted that  Butman's  lower bound is also derived by Wolfowitz \cite{wolfowitz1975}, by transmitting one of $e^{n R}$ messages and maximizing the operational rate $R$ (i.e., without using information theoretic measures of mutual information etc); rather, by applying  the operational  definition of achievable rates.

Below, we make an observation which  is easy to verify.

\begin{observation} By evaluating $C_n^{L}(\kappa_1, \ldots, \kappa_n, v_0)$, described in Theorem~\ref{thm_bg_1}.(c),  for the  parameters $K_{W_t}=K_W, c_t=c \in [-1,1], t=1,2 \ldots$, Gaussian message $\Theta \in G(0,1)$, i.e, $K_{\Theta}=1$,  at Butman's strategy  \cite[4 lines below eqn(19)]{butman1976}, 
\begin{align}
sgn(g_n) =-sgn(c g_{n-1}),  n=2,3, \ \  \mbox{ implies} \ \ \sum_{j=2}^{n}\big(g_j-c g_{j-1}\big)^2= \sum_{j=2}^{n} (g_j)^2 \Big(1+|c| | \frac{g_{j-1}}{g_j}|\Big)^2,
\end{align}
then  by simple algebra we obtain  Butman's and Wolfowitz's lower bound  (this is also  verified in the sequel). 
\end{observation}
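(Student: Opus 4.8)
The plan is to verify the Observation by direct substitution into the formula for $C_n^{L}(\kappa_1, \ldots, \kappa_n, v_0)$ from Theorem~\ref{thm_bg_1}.(c), then to extract the recursion for the ratios $\chi_n \tri |g_n/g_{n-1}|$ and identify its fixed point, and finally to invoke Theorem~\ref{thm:bg_0}.(b) to conclude that $|g_n| \to +\infty$ whenever the limiting rate is nontrivial. First I would instantiate Theorem~\ref{thm_bg_1}.(c) with $K_{W_t}=K_W$, $c_t = c\in[-1,1]$, $K_\Theta = 1$, so that $K_{V_1}$ is replaced by $K_{W_1}=K_W$ and the recursions \eqref{inn_b_5_in}--\eqref{inn_b_6_in} become time-invariant in their coefficients. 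Under Butman's sign choice $sgn(g_n) = -sgn(c\,g_{n-1})$, the cross term in $(g_t - c g_{t-1})^2$ has the ``right'' sign, so $(g_t - c g_{t-1})^2 = g_t^2\bigl(1 + |c|\,|g_{t-1}/g_t|\bigr)^2 = g_t^2\bigl(1 + |c|/\chi_t\bigr)^2$, which is exactly the claimed identity; substituting this into the logarithmic sum and into the denominator of \eqref{inn_b_5_in} is the ``simple algebra'' the Observation refers to.

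Next I would carry out the bookkeeping on the variance recursion. Writing $\Sigma_{t-1}$ in terms of the ratios and using the pointwise power constraint $g_t^2 \Sigma_{t-1} = \kappa$ (active at the optimum, per the per-symbol power remark), each summand in the rate becomes $\tfrac12\log\bigl(1 + (1+|c|/\chi_t)^2 \kappa/K_W\bigr)$, and the constraint $g_t^2\Sigma_{t-1} = \kappa$ together with \eqref{inn_b_5_in} forces a recursion on $\chi_t$ alone. One then checks that this recursion is precisely the one whose fixed point solves $\chi^2 = 1 + (1+|c|/\chi)^2 \kappa/K_W$, matching $\chi$ in the abstract; passing to the limit gives $C^{L,B} = \tfrac12\log\chi^2$. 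The nontriviality hypothesis $\chi > 1$ (equivalently $C^{L,B}>0$, which holds for any $\kappa>0$) then says exactly that $\lim_n |g_n/g_{n-1}| = \chi > 1$, so Theorem~\ref{thm:bg_0}.(b) applied to $a_n = g_n$ yields $\lim_n |g_n| = +\infty$, establishing (C).

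The step I expect to be the main obstacle is the algebraic reduction of the coupled pair $(\Sigma_t, g_t)$ to a clean scalar recursion in $\chi_t$ and the verification that its limit is the advertised root of $\chi^2 = 1 + (1+|c|/\chi)^2\kappa/K_W$ rather than some spurious fixed point; in particular one must confirm that the relevant root is the one with $\chi>1$ and argue convergence $\chi_n \to \chi$ (monotonicity or a contraction estimate on the recursion map). A secondary subtlety is justifying that the supremum in Theorem~\ref{thm_bg_1}.(c) is attained at Butman's sign pattern and at the boundary of the power constraint, so that the substitution is not merely a lower bound on the optimized rate but reproduces Butman's value exactly; once the sign choice is fixed this is a routine check that each logarithmic term is increasing in the corresponding $g_t^2\Sigma_{t-1}$. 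Everything downstream of the scalar recursion—identifying the fixed point, taking limits, invoking the ratio test—is then immediate.
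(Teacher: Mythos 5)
Your core computation is correct and is exactly the paper's route: the sign condition $sgn(g_t)=-sgn(c\,g_{t-1})$ gives $(g_t-cg_{t-1})^2=\bigl(|g_t|+|c|\,|g_{t-1}|\bigr)^2=g_t^2\bigl(1+|c|/\chi_t\bigr)^2$; substituting this into the rate expression and the $\Sigma_t$-recursion of Theorem~\ref{thm_bg_1}.(c) (with $K_\Theta=1$ and $K_{V_1}$ replaced by $K_{W_1}=K_W$) and imposing the per-symbol constraint $g_t^2\Sigma_{t-1}=\kappa$ collapses the coupled $(\Sigma_t,g_t)$ dynamics to the scalar recursion $\chi_{t+1}^2=1+(1+|c|/\chi_t)^2\kappa/K_W$, which is Butman's problem {\bf (B2)}. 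This is precisely what the paper does in Facts 1--6 and Proposition~\ref{prop_but}.

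Two cautions. First, your ``secondary subtlety'' --- justifying that the supremum in Theorem~\ref{thm_bg_1}.(c) is attained at Butman's sign pattern --- is both unnecessary and, for finite $n$, false: the Observation only asserts that \emph{evaluating} the objective at Butman's strategy (\ref{butman_cs__a2}) together with the pointwise power constraint reproduces Butman's and Wolfowitz's lower bound; it makes no optimality claim, and the paper's own simulations (Figure~\ref{fig:comparison2}) show that {\bf (P2)} strictly outperforms {\bf (B2)} at $n=10$. Your proposed ``routine check that each logarithmic term is increasing in $g_t^2\Sigma_{t-1}$'' establishes only a greedy, per-term optimality of the sign choice for fixed $|g_t|$; it ignores the effect of the sign choice on $\Sigma_t$ and hence on all future terms, so it cannot upgrade the evaluation to attainment of the supremum. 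Second, the fixed-point equation $\chi^2=1+(1+|c|/\chi)^2\kappa/K_W$, the convergence $\chi_n\to\chi$, and the ratio-test conclusion $|g_n|\to\infty$ belong to Butman's asymptotic statement {\bf (B)} and to Theorem~\ref{thm:ip}, not to this Observation, whose content is the finite-$n$ algebraic reduction; including them does no harm, but note that the convergence of $\chi_n$ is imported from \cite{butman1976} rather than re-proved here.
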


{\it Fact 1.}  Butman \cite{butman1976}  and also Wolfowitz \cite{wolfowitz1975},   considered the AGN channel driven by the  AR$(c), c\in [-1,1]$ noise, i.e., $W_t \in G(0,K_W) \forall t$,  and  applied  the linear {\it time-varying coding strategy}\footnote{A coding strategy is  called  {\it time-varying coding strategy} if the  strategy $g_n$ that is used to generate $X_n$ is not fixed for all $n=1, 2, \ldots$.} of a Gaussian message $\Theta \in G(0, 1)$, i.e., zero mean and unit variance $K_\Theta=1$,  to obtain\footnote{The derivation of $\Sigma_t$ is given in Corollary~ \ref{cor_batman_gen_a}, and Theorem~\ref{thm_batman_gen} if no initial state $V_0=v_0$ is assumed.}  
\begin{align}
&X_n= g_n\Big(\Theta - {\bf E}\Big\{\Theta\Big|Y^{n-1}, V_0=v_0\Big\}\Big), \hst n=2,3,\ldots, \hst X_1=  g_1\Theta,  \label{butman_cs_1}
 \\
&\frac{1}{n} {\bf E}\Big\{\sum_{t=1}^n\big(X_t\big)^2\Big|V_0=v_0\Big\} = \frac{1}{n} \sum_{t=1}^n \kappa_t \leq \kappa,\hst  \kappa_t \tri g_t^2 \Sigma_{t-1}, \hso t=2, \ldots, n, \hso \kappa_1 \tri g_1^2,\label{butman_cs_2}\\
&\Sigma_t \tri  {\bf E}\Big\{\Big(\Theta - {\bf E}\Big\{\Theta\Big|Y^{t}, V_0=v_0\Big\}\Big)^2\Big\}  = \frac{K_W }{g_1^2  +\sum_{j=2}^{t}\big(g_j-c g_{j-1}\big)^2+K_W  },  \hso t=2,\ldots, n, \label{butman_cs_2_a}\\
& \Sigma_1=\frac{K_W}{g_1^2+K_W}\label{butman_cs_2_b}
\end{align}
{\it Fact 2.}  \cite[4 lines below eqn(19)]{butman1976},  restricted   the  strategy $g_n$ to the one that  increases the signal-to-noise ratio, $\frac{g_1^2  +\sum_{j=2}^{t}\big(g_j-c g_{j-1}\big)^2}{K_W}$,  as follows. 
\begin{align}
  sgn(g_n) =-sgn(c g_{n-1}), \hst n=2,3, \ldots \hso \Longrightarrow \hso \sum_{j=2}^{n}\big(g_j-c g_{j-1}\big)^2= \sum_{j=2}^{n} (g_j)^2 \Big(1+|c|| \frac{g_{j-1}}{g_j}|\Big)^2.  \label{butman_cs__a2}
\end{align} 
{\it Fact 3.} \cite[eqn(26)--eqn(28)]{butman1976}  evaluated  $I(\Theta; Y^n|v_0)$, by applying  the linear coding strategy  (\ref{butman_cs__a2})    as follows. 
\begin{align}
    I(\Theta; Y^n|v_0) =& \frac{1}{2}\log \Big\{1+ \frac{\kappa_1}{K_W} \Big\}+   \frac{1}{2}\sum_{t=2}^n \log \Big\{1+ \Big(g_t-c g_{t-1}\Big)^2 \frac{\Sigma_{t-1}}{K_W} \Big\}  \label{ihara_8_aaa}  \\
=& \frac{1}{2}\log \Big\{1+ \frac{\kappa_1}{K_W} \Big\}+   \frac{1}{2}\sum_{t=2}^n \log \Big\{1+ \Big(1+ \frac{|c|}{\chi_t}\Big)^2 \frac{\kappa_t}{K_W} \Big\}    \hst \mbox{by (\ref{butman_cs__a2})} \label{ihara_8_aa}
%\\
%=& \frac{1}{2}\log \Big\{1+ \frac{\kappa_1}{K_W} \Big\}+  \frac{1}{2}\log \sum_{t=2}^n \chi_{t+1}^2
\end{align}
where $\chi_n$ is related to $(g_n, g_{n-1})$ and satisfies the recursion \cite[eqn(23)]{butman1976},
\begin{align}
& \chi_n \tri \Big|\frac{g_n}{g_{n-1}}\Big|, \hst n=3, 4, \ldots, \hst \chi_2= 1+ \frac{\kappa_1}{K_W}, \label{ihara_8}\\
&\chi_n^2=\Big\{1+\Big(1+ \frac{|c|}{\chi_{n-1}}\Big)^2 \frac{\kappa_{n-1}}{K_W}\Big\}\frac{\kappa_n}{\kappa_{n-1}}, \hso n=3,4, \ldots  \label{ihara_9}
\end{align}
{\it Fact 4.} Butman's  resulting optimization problem, from Fact 3, reduces  to  
\begin{align}
{\bf (B1):} \hst 
 &\frac{1}{n}C_n^{L,B1}(\kappa,v_0)\tri  \sup_{g_1, \ldots, g_n}\Big\{
 \frac{1}{2n}\log \Big\{1+ \frac{\kappa_1}{K_W} \Big\}+   \frac{1}{2n}\sum_{t=2}^n \log \Big\{1+ \Big(1+ \frac{|c|}{\chi_t}\Big)^2 \frac{\kappa_t}{K_W} \Big\} \Big\}
  \label{ihara_8_a}\\
  &\mbox{ subject to} \hso \sum_{t=1}^n \kappa_t \leq \kappa, \hso  \kappa_t \tri g_t^2 \Sigma_{t-1}, \hso t=2, \ldots, n, \hso \kappa_1= g_1^2,\\
  &\mbox{$\chi_t$ satisfies recursion (\ref{ihara_8}), (\ref{ihara_9}),} \label{butman_cs_2_bbb}\\
  &\Sigma_t   = \frac{K_W }{g_1^2  +\sum_{j=2}^{t}(g_j)^2\big(1+\frac{|c|}{\chi_j}\big)^2+K_W  },  \hso t=2,\ldots, n  \hst \mbox{by (\ref{butman_cs__a2})},   \label{butman_cs_2_aa}\\
& \Sigma_1=\frac{K_W}{g_1^2+K_W}.\label{butman_cs_2_bb}
\end{align}
{Notice that in {\bf (B1)} one is asked to optimize over $\{g_1, g_2, \ldots, g_n\}$ subject to constraints, or over $\{\kappa_1, \kappa_2, \ldots, \kappa_n\}$}. 

{\it Fact 5.}  \cite{butman1976} did not provide a solution to $C_n^{L,B1}(\kappa,v_0)$;  instead the statement of Butman's  conjecture \cite[Abstract]{butman1976} on feedback capacity is based on a variation, based on Assumptions~\ref{ass:a1} (below).

\begin{assumptions}  \cite{butman1976}  Average point-wise power constraints\footnote{Wolfowitz's  \cite{wolfowitz1975} derivation is also based on (\ref{bt_ass_1}).}  \\
\label{ass:a1}
  Butman's lower bound on feedback capacity and conjecture \cite[Abstract]{butman1976},  is based on the  single-letter   average  power at the transmitter (easily verified from \cite{butman1976}), given by 
%\begin{align}
%  I(\Theta; Y^n|v_0)
%=& \frac{1}{2}\log \Big\{1+ \frac{\kappa_1}{K_W} \Big\}+   \frac{1}{2}\sum_{t=2}^n \log \Big\{1+ \Big(1+ \frac{|c|}{\chi_t}\Big)^2 \frac{\kappa_t}{K_W} \Big\} \label{butman_cs_3}
%\end{align}
%where $\chi_n$ is related to $(g_n, g_{n-1})$ and satisfies the recursion
%\begin{align}
%& \chi_n \tri \Big|\frac{g_n}{g_{n-1}}\Big|, \hst n=2, 3, \ldots,  \label{butman_cs_4} \\
%&\chi_2= 1+ \frac{\kappa_1}{K_W}, \\
%&\chi_n^2=\Big\{1+\Big(1+ \frac{|c|}{\chi_{n-1}}\Big)^2 \frac{\kappa_{n-1}}{K_W}\Big\}\frac{\kappa_n}{\kappa_{n-1}}, \hso n=3,4, \ldots.  \label{butman_cs_5}
%\end{align}
%To derive the lower bound on the rate, Butman imposed the following assumption.
\begin{align}
{\bf E}\Big\{\big(X_t\big)^2\Big|V_0=v_0\Big\} = \kappa, \hso  t=1,2, \ldots, n \hso \Longrightarrow \hso \kappa_t \tri g_t^2 \Sigma_{t-1}=\kappa, \hso t=2, \ldots, n, \hso \kappa_1 \tri g_1^2 =\kappa \label{bt_ass_1}
\end{align}
where $\Sigma_t, t=1, \ldots, n$ satisfies (\ref{butman_cs_2_aa}) and (\ref{butman_cs_2_bb}).
\end{assumptions}

{In view of Assumptions~\ref{ass:a1}, Butman's lower bound  \cite[see transition from eqn(23) to eqn(23a) or   eqn(28) and paragraph above it]{butman1976}, is given as follows\footnote{There is no optimization over $g_1, g_2, \ldots$ because Assumptions~\ref{ass:a1} imply $\chi_1,\chi_2, \ldots$ does not depend on $g_1, g_2, \ldots$.}. } 
\begin{align}
{\bf (B2):} \hst  &\frac{1}{n}C_n^{L,B2}(\kappa,v_0)=  
 \frac{1}{2n}\log \Big\{1+ \frac{\kappa}{K_W} \Big\}+   \frac{1}{2n}\sum_{t=2}^n \log \Big\{1+ \Big(1+ \frac{|c|}{\chi_t}\Big)^2 \frac{\kappa}{K_W} \Big\}
  \label{ihara_8_aa}\\
  &\mbox{$\chi_t$ satisfies recursion $\chi_t^2=1+\Big(1+ \frac{|c|}{\chi_{n-1}}\Big)^2 \frac{\kappa}{K_W}, \hso t=3,4, \ldots, \chi_2= 1+ \frac{\kappa}{K_W}$,}\label{ihara_8_aaa}\\
  &  g_t^2 \Sigma_{t-1}=\kappa, \hso t=2, \ldots, n, \hso g_1^2=\kappa, \hso 
  \mbox{$\Sigma_t, t=1, \ldots, n$ satisfies (\ref{butman_cs_2_aa}), (\ref{butman_cs_2_bb}).} 
\end{align}
{Unlike {\bf (B1)}, in {\bf (B2)}  there is no  optimization  over $\{g_1, g_2, \ldots, g_n\}$  or $\{\kappa_1, \kappa_2, \ldots, \kappa_n\}$.}

%The expression of feedback rate given in the next theorem is known as Butman's   lower bound on feedback capacity. 
{\it Fact 6.} Butman's lower bound \cite[Abstract, with $m=1$ or eqn(4), eqn(5), eqn(11)]{butman1976}
Butman \cite[eqn(23a) and paragraph above eqn(28)]{butman1976}, which is based on  Assumptions~\ref{ass:a1} and strategy  (\ref{butman_cs__a2}), is stated as follows.   
\begin{align}
{\bf (B):} \hst  C^{L,B}(\kappa)  \tri \lim_{n \longrightarrow \infty} \frac{1}{n}   C_n^{L,B2}(\kappa,v_0) 
%=&  \lim_{n \longrightarrow \infty}  \frac{1}{2n}\log \Big\{1+ \frac{\kappa}{K_W} \Big\}+   \lim_{n \longrightarrow \infty}  \frac{1}{2n}\sum_{t=2}^n \log \Big\{1+ \Big(1+ \frac{|c|}{\chi_t}\Big)^2 \frac{\kappa}{K_W} \Big\}\\
% =& \frac{1}{2}\log \Big\{1+ \Big(1+ \frac{|c|}{\chi}\Big)^2  \frac{\kappa}{K_W} \Big\} \label{butman_8}\\
 =\frac{1}{2}\log \chi^2,  \label{butman_cs_2}
 \end{align}
where 
\begin{align}
&\chi = \lim_{n \longrightarrow \infty} \chi_n= \lim_{n \longrightarrow \infty} \Big| \frac{g_n}{g_{n-1}}\Big|,  \label{butman_cs_4}\\
&\mbox{$\chi$ is the positive root of} \hso 
\chi^4-\chi^2 - \frac{\kappa}{K_W} \Big(\chi+|c|\Big)^2=0, \hst |c|\leq 1, \hso K_W>0, \hso \kappa \geq 0, \label{butman_cs_3}
\end{align}
Butman \cite{butman1976}   conjectured that $C^{L,B}(\kappa)$ is the feedback capacity. 

%
%
%
%\begin{theorem} Butman's lower bound \cite[Abstract, with $m=1$ or eqn(4), eqn(5), eqn(11)]{butman1976}
%Butman \cite[eqn(23a) and paragraph above eqn(28)]{butman1976}\\
%\label{thm:butman}
%Under Assumptions~\ref{ass:a1} and strategy  (\ref{butman_cs__a2}), a    lower bound on the rate is given by 
%\begin{align}
%{\bf (B):} \hst  C^{L,B}(\kappa)  \tri \lim_{n \longrightarrow \infty} \frac{1}{n}   C_n^{L,B2}(\kappa,v_0) 
%%=&  \lim_{n \longrightarrow \infty}  \frac{1}{2n}\log \Big\{1+ \frac{\kappa}{K_W} \Big\}+   \lim_{n \longrightarrow \infty}  \frac{1}{2n}\sum_{t=2}^n \log \Big\{1+ \Big(1+ \frac{|c|}{\chi_t}\Big)^2 \frac{\kappa}{K_W} \Big\}\\
%% =& \frac{1}{2}\log \Big\{1+ \Big(1+ \frac{|c|}{\chi}\Big)^2  \frac{\kappa}{K_W} \Big\} \label{butman_8}\\
% =\frac{1}{2}\log \chi^2  \label{butman_cs_2}
% \end{align}
%where 
%\begin{align}
%&\mbox{$\chi$ is the positive root of} \hso 
%\chi^4-\chi^2 - \frac{\kappa}{K_W} \Big(\chi+|c|\Big)^2=0, \hst |c|\leq 1, \hso K_W>0, \hso \kappa \geq 0, \label{butman_cs_3}\\
%&\chi = \lim_{n \longrightarrow \infty} \chi_n= \lim_{n \longrightarrow \infty} \Big| \frac{g_n}{g_{n-1}}\Big|. \label{butman_cs_4}
%\end{align}
%Butman \cite{butman1976}   conjectured that $C^{L,B}(\kappa)$ is the feedback capacity. 
%\end{theorem}

\begin{remark}
%Wolfowitz \cite{wolfowitz1975}, under Assumptions~\ref{ass:a1}, 
% derived independently Butman's lower bound, from the operational definition of  feedback rates. 
 A comparison of $C^{L,B}(\kappa)$  to an upper bound derived by  Tienan and Schalkwijk \cite[Section~I]{tienan-schalkwijk1974}   is discussed in \cite[eqn(1) and eqn(2)]{butman1976}. 
Ozarow \cite{ozarow1990} and Dembo \cite{dembo1989} re-visited Butman's lower bound and derived upper bounds on feedback rates. 
\end{remark}

To complete our observations on Butman's  problems we note two more facts.

{\it Fact 7.} If we do not impose Butman's restriction    that strategies  $g_n$ satisfy $ sgn(g_n) =-sgn(c g_{n-1}), n=2,3, \ldots$, i.e., (\ref{butman_cs__a2}) is not assumed, then the optimization problem {\bf (B1)} and statement {\bf (B2)}, are replaced by {\bf (P1)} and {\bf (P2)}, respectively, given   below.
\begin{align}
{\bf (P1):} \hst 
 &\frac{1}{n}C_n^{L,P1}(\kappa,v_0)\tri  \sup_{g_1, \ldots, g_n}\Big\{\frac{1}{2n}\log \Big\{1+ \frac{\kappa_1}{K_W} \Big\}+   \frac{1}{2n}\sum_{t=2}^n \log \Big\{1+ \Big(g_t-c g_{t-1}\Big)^2 \frac{\Sigma_{t-1}}{K_W} \Big\} \Big\}
  \label{ihara_8_ap}\\
  &\mbox{ subject to} \hso \frac{1}{n} \sum_{t=1}^n \kappa_t \leq \kappa, \hso  \kappa_t \tri g_t^2 \Sigma_{t-1}, \hso t=2, \ldots, n, \hso \kappa_1= g_1^2,\\
  &\Sigma_t   = \frac{K_W }{g_1^2  +\sum_{j=2}^{t}\big(g_j-c g_{j-1}\big)^2+K_W  },  \hso t=2,\ldots, n,   \label{butman_cs_2_aap}\\
& \Sigma_1=\frac{K_W}{g_1^2+K_W},\label{butman_cs_2_bbp}
\end{align}
\begin{align}
{\bf (P2):} \hst 
 &\frac{1}{n}C_n^{L,P2}(\kappa,v_0)\tri  \sup_{g_1, \ldots, g_n}\Big\{\frac{1}{2n}\log \Big\{1+ \frac{\kappa_1}{K_W} \Big\}+   \frac{1}{2n}\sum_{t=2}^n \log \Big\{1+ \Big(g_t-c g_{t-1}\Big)^2 \frac{\Sigma_{t-1}}{K_W} \Big\} \Big\}
  \label{ihara_8_app}\\
  &\mbox{ subject to} \hso  g_t^2 \Sigma_{t-1}=\kappa, \hso t=2, \ldots, n, \hso g_1^2=\kappa,\\
  &\Sigma_t,  t=1,\ldots, n \hso \mbox{satisfies   (\ref{butman_cs_2_aap}), (\ref{butman_cs_2_bbp}).}  \label{butman_cs_2_aapp}
\end{align}
{Notice {\bf (P1),  (P2)} are optimization problems, where   $\{g_1, g_2, \ldots, g_n\}$ control $\{\Sigma_1, \Sigma_2, \ldots, \Sigma_n\}$.}\\
{In Section~\ref{sect:sim} we compare the numerical solutions of  {\bf (B2)} and {\bf (B)} to {\bf  (P2)}. }

The next theorem, is an application of the ratio test theorem to Butman's lower bound $C^{L,B}(\kappa)$ given by  (\ref{butman_cs_2})-(\ref{butman_cs_4}).

\begin{theorem} On Butman's lower bound\\
\label{thm:ip}
Consider the lower bound  \cite{butman1976},   $C^{L,B}(\kappa)=\frac{1}{2}\log \chi^2$ given by  (\ref{butman_cs_2})-(\ref{butman_cs_3}). \\
 If $\lim_{n\longrightarrow \infty} \left| {g_{n}}/{g_{n-1}} \right| =\chi   >1$, then $\lim_{n\longrightarrow\infty}|g_n| =+\infty$. \\ 
 If $\lim_{n\longrightarrow \infty} \left| {g_{n}}/{g_{n-1}} \right| =\chi   < 1$, then $\lim_{n\longrightarrow\infty}|g_n| =0$, and  $C^{L,B}(\kappa)=\frac{1}{2}\log \max\{1, \chi^2\}=0 \; \forall \kappa \in [0,\infty)$.
\end{theorem}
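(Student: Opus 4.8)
The statement is essentially a corollary of the ratio test, Theorem~\ref{thm:bg_0}, applied to the gain sequence $\{a_n\}=\{g_n\}$, supplemented by a short computation that pins down the value of the per-unit-time limit in the contracting regime. The plan is therefore: (i) dispose of the divergent case $\chi>1$ by a one-line appeal to Theorem~\ref{thm:bg_0}(b); (ii) treat the contracting case $\chi<1$ by first extracting $|g_n|\to 0$ from Theorem~\ref{thm:bg_0}(b), then upgrading this to geometric decay, and finally using the closed form of $\Sigma_n$ and the telescoping of the mutual information supplied by Theorem~\ref{thm_bg_1} to conclude the rate is $0$; (iii) reconcile the sign of $\chi^2-1$ with nonnegativity of rates to read off $C^{L,B}(\kappa)=\tfrac12\log\max\{1,\chi^2\}$.

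\emph{Divergent case.} By definition $\chi=\lim_{n\to\infty}|g_n/g_{n-1}|=\lim_{n\to\infty}\chi_n$, with $\chi_n$ the positive root of Butman's recursion. If $\chi>1$, then $\lim_{n\to\infty}|g_{n+1}/g_n|=\chi>1$, and Theorem~\ref{thm:bg_0}(b) applied with $a_n=g_n$ gives $\lim_{n\to\infty}|g_n|=+\infty$, which is conclusion (C).

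\emph{Contracting case.} If $\chi<1$, Theorem~\ref{thm:bg_0}(b) again applies and yields $\lim_{n\to\infty}|g_n|=0$. To evaluate the rate, fix any $\rho$ with $\chi<\rho<1$; then for all $n$ past some $N$ one has $|g_n|\le|g_N|\,\rho^{\,n-N}$, hence $\sum_{n\ge 2}(g_n-c g_{n-1})^2\le\sum_{n\ge 2}2\big(g_n^2+c^2 g_{n-1}^2\big)<\infty$. By the closed form $\Sigma_t=K_W\big/\big(g_1^2+\sum_{j=2}^t(g_j-c g_{j-1})^2+K_W\big)$ of Theorem~\ref{thm_bg_1} (equivalently the recursion for $\Sigma_t$), this summability shows $\Sigma_t$ is non-increasing and converges to a strictly positive limit $\Sigma_\infty>0$. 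Telescoping the mutual-information formula of Theorem~\ref{thm_bg_1} via the identity $1+(g_t-c g_{t-1})^2\Sigma_{t-1}/K_W=\Sigma_{t-1}/\Sigma_t$ gives $I(\Theta;Y^n|v_0)=\tfrac12\log(K_\Theta/\Sigma_n)$, which converges to the finite value $\tfrac12\log(K_\Theta/\Sigma_\infty)$. Hence $C^{L,B}(\kappa)=\lim_{n\to\infty}\tfrac1n I(\Theta;Y^n|v_0)=0$. Since $\chi<1$ forces $\max\{1,\chi^2\}=1$, this equals $\tfrac12\log\max\{1,\chi^2\}$, and the conclusion $C^{L,B}(\kappa)=0$ holds irrespective of which admissible triple $(\kappa,c,K_W)$ produced a contracting root.

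\emph{Main obstacle.} There is no genuine analytic difficulty: once the ratio test and the telescoped closed forms of $\Sigma_n$ and $I(\Theta;Y^n|v_0)$ from Theorem~\ref{thm_bg_1} are in hand, both halves are short. The only points needing care are (a) the passage from $|g_n/g_{n-1}|\to\chi<1$ to absolute summability of $(g_n-c g_{n-1})^2$, a routine geometric comparison, and (b) observing that in Butman's own per-symbol formulation \textbf{(B2)} with $\kappa>0$ the recursion $\chi_t^2=1+(1+|c|/\chi_{t-1})^2\kappa/K_W$ forces $\chi\ge\sqrt{1+\kappa/K_W}>1$, so the contracting branch is really a consistency statement---it certifies that a vanishing gain sequence can only support a zero rate---rather than a case occurring inside Butman's construction; this is arguably the remark worth stressing alongside the proof.
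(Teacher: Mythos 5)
Your proposal is correct, and for the two limit statements it uses exactly the paper's tool: Theorem~\ref{thm:bg_0}(b) applied to $a_n=g_n$. Where you genuinely diverge is in the contracting case $\chi<1$. The paper's proof disposes of the claim $C^{L,B}(\kappa)=\frac{1}{2}\log\max\{1,\chi^2\}=0$ in one line, by noting that the rate is computed from a mutual information and therefore cannot be negative, so the formula $\frac{1}{2}\log\chi^2<0$ must be clipped at zero. You instead prove the stronger statement that the rate is exactly zero: the geometric comparison $|g_n|\le|g_N|\rho^{\,n-N}$ gives summability of $(g_n-cg_{n-1})^2$, hence $\Sigma_t\downarrow\Sigma_\infty>0$ by the closed form \eq{butman_cs_2_a}, and the telescoping identity $1+(g_t-cg_{t-1})^2\Sigma_{t-1}/K_W=\Sigma_{t-1}/\Sigma_t$ shows $I(\Theta;Y^n|v_0)\to\frac{1}{2}\log(K_\Theta/\Sigma_\infty)<\infty$, so the per-unit-time limit vanishes. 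This buys a genuine evaluation of the limit rather than an a~priori nonnegativity bound, at the cost of invoking the machinery of Theorem~\ref{thm_batman_gen}. Your closing observation---that Butman's own recursion \eq{ihara_8_aaa} forces $\chi_t^2\ge 1+\kappa/K_W>1$ whenever $\kappa>0$, so the contracting branch never actually arises in the construction {\bf (B2)}---is correct and is not made explicit in the paper; it clarifies that the second half of the theorem is a consistency check rather than a realizable regime.
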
 
\begin{proof}
Suppose Butman's  sequence $\{g_1, g_2, \ldots, g_n\}$,   corresponding to   (\ref{butman_cs_4}), (\ref{butman_cs_3}),  i.e.,  computed using $\chi = \lim_{n \longrightarrow \infty} \chi_n= \lim_{n \longrightarrow \infty} \Big| \frac{g_n}{g_{n-1}}\Big|$,  such that  $\chi$ is 
 is the real positive root of the   quadric equation (\ref{butman_cs_3}),   with   $|\chi|>1$. Then by Theorem~\ref{thm:bg_0}.(b), sequence $\{g_1, g_2, \ldots, g_n\}$, is such that,  $|g_n| \longrightarrow \infty$,  as  $n\longrightarrow \infty$.  On the other hand, Theorem~\ref{thm:bg_0}.(b),   if $|\chi |<1$ then $|g_n| \longrightarrow 0$,  as  $n\longrightarrow \infty$,  and since $C^{L,B}(\kappa)$ is computed using mutual information which takes values in $[0,\infty]$ then the claim holds.
\end{proof} 

This implies, Butman 's  lower bound is achieved by an  unbounded sequence of  coding gains $|g_1|, |g_2|, \ldots$. Therefore, such coding schemes are not practical, and moreover they are  {\it fragile},  at least for transmission intervals $\{1,2, \ldots, n\}$ of  moderate duration $n$. By fragile, we mean,  any mismatch of the estimation error $\Theta - {\bf E}\Big\{\Theta\Big|Y^{n-1}, V_0=v_0\Big\}$ due to any additional external noise in the Gaussian channel not accounted for, at any given time of transmission, will be amplified by the scaling $g_n$ (see abstract).

Further to the above, it should be apparent that optimization problem $(\mathbf{B1})$, which uses a total average power constraint,  i.e.,  (\ref{ihara_8_a})-(\ref{butman_cs_2_bbb}),  is also  achieved by an unbounded sequence $|g_1|, |g_2|, \ldots$.

%Below, we provide additional   observations.

\begin{remark} 
%On Butman's lower bound on feedback rate \cite{butman1976}, i.e., (\ref{butman_cs_2})-(\ref{butman_cs_4})\\
%(a) Problem (B),  i.e., (\ref{butman_cs_2})-(\ref{butman_cs_4}), presupposes that the sequence $|g_n|, n=1,2, \ldots$ that satisfies $g_n^2 \Sigma_{n-1}=\kappa, n=2,3, \ldots, g_1^2=\kappa$ remains bounded; however, since $g_n^2 =\frac{\kappa}{\Sigma_{n-1}}$, and the sequence  $\Sigma_n, n=1,2, \ldots$ converges to zero, as $n \longrightarrow \infty$,  then the sequence $g_n^2, n=1, 2, \ldots$ is unbounded. This is illustrated by  simulations shown in Fig.~\ref{fig:comparison1}.\\
%(b) The fact that   the sequences $|g_n|, n=1,2,$ grows unbounded, while the ratio $\chi_n = \Big| \frac{g_n}{g_{n-1}}\Big|, n=1,2, \ldots$ converges,  is not observed in past literature that relates feedback rates to \cite{yang-kavcic-tatikonda2007,kim2010}.   \\
From the above  follows that  convergence of the sequence $\chi_n, n=1, 2, \ldots$ does not imply convergence of the sequence $|g_n|, n=1,2, \ldots$. In fact, if $|g_n|, n=1,2, \ldots$ converges to $|g|$ then necessarily $\chi_n, n=1,2, \ldots $ convergences to $\chi =1$, and the value of the lower bound,  is $\frac{1}{2} \log \chi^2=0$. \\
%(c) \CDC{Proposition~\ref{prop_but} states  that the lower bound (\ref{butman_cs_2})-(\ref{butman_cs_4})
% remains valid for unstable AR$(c), c\notin [-1,1]$ noise. }
%\\  
%(e) The  comparison of  the lower bound of Theorem~\ref{thm:butman}, $C^{L,B}(\kappa)$, and the water-filling formula of  non-feedback capacity,  presented in \cite{butman1976},   should be read with caution, because  $C^{L,B}(\kappa)$ is derived using channel inputs, generated by time-varying strategies, i.e., $g_1,g_2, \ldots$, while the water-filling formula is derived  by using time-invariant channel input strategies, and such that $X_1, X_2, \ldots$ is  stationary or asymptotically stationary. 
\end{remark}

\begin{figure}[ht]
    \centering
    \includegraphics[width=0.8\columnwidth]{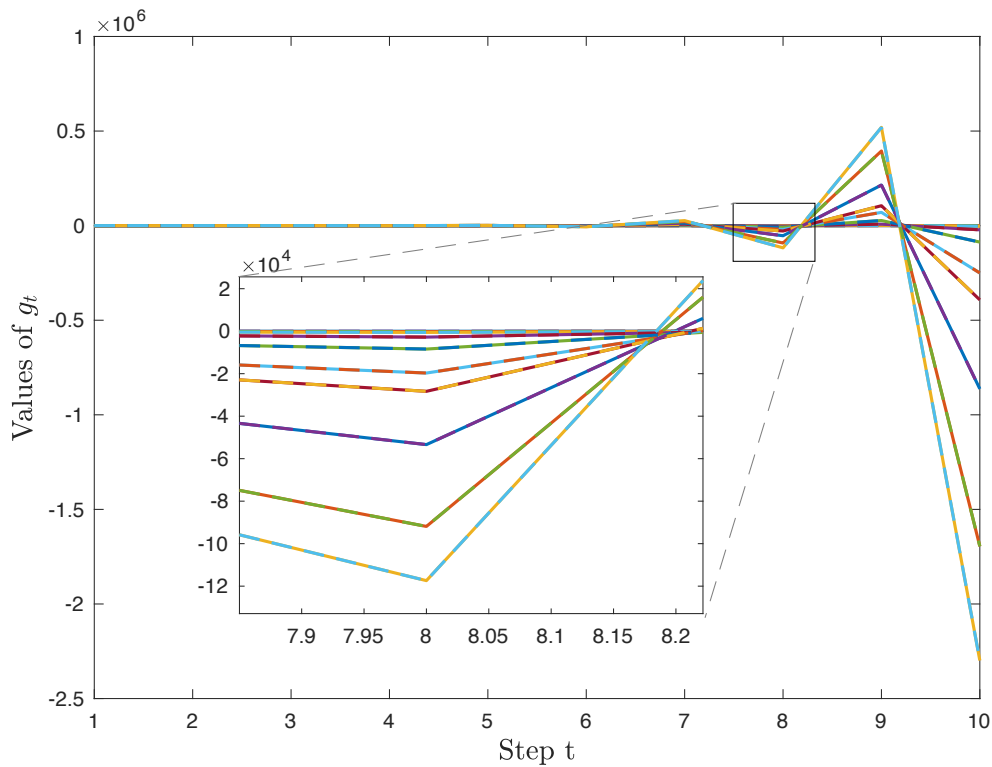}
    \caption{Numerical example with which it is shown that Butman's strategy  $g_t,t=1,2,\ldots$ , shown in solid line, as well as the numerical evaluation  of {\bf (P2)}, shown in dashed line,  for $n=10$ have values of $g_t, t=1, \ldots, n$ that are similar and grow unbounded. The small differences, however, lead to the gap in the solution, as shown in Fig.~\ref{fig:comparison2}, further supporting our claim about the fragility of Butman's strategy $g_1,g_2, \ldots$.  For the simulations, we used $K_{\Theta} = K_W = K_V =1$, $c =0.5$. The values of the sequence $|g_1|, |g_2|, \ldots, |g_n|$ grows unbounded, as  predicted by Theorem~\ref{thm:ip}. }
    \label{fig:comparison1}
\end{figure}

\subsection{Simulations of Butman's Coding Scheme}
\label{sect:sim}
{Figure~\ref{fig:comparison2} is a  comparison between the numerical evaluation of {\bf (B2)}, {\bf (B)}  and {\bf (P2)}. Notice that {\bf (B2)} and  {\bf (B)} are derived by using Butman's strategy (\ref{butman_cs__a2}).  For {\bf (B2)}, once $\chi_t, t=1,2, \ldots, n$ is computed then   $g_t^2 \Sigma_{t-1}=\kappa, t=2, \ldots, n,  g_1^2=\kappa,$ and $g_t, t=1, \ldots, n$ is found from $\Sigma_t, t=1, \ldots, n$ that satisfies (\ref{butman_cs_2_aa}), (\ref{butman_cs_2_bb}) }

While the values of the sequence $g_1, g_2, \ldots, g_n$ are similar, and their absolute values  grow unbounded, as it is shown in Fig.~\ref{fig:comparison1}, the numerical evaluation of optimization problem {\bf (P2)} for $n=10$ performs better than    Butman's lower bound {\bf (B2)}  (which is based on Butman's strategy (\ref{butman_cs__a2})), even for $n=20$.  Therefore, for finite $n$, simulations show that  Butman's strategy (\ref{butman_cs__a2})  leading to {\bf (B2)}, is not optimal. Note that for values of $n>10$,     the numerical optimization problem {\bf (P2)}  is difficult to complete, because the sequence $|g_n|$ grows unbounded,  and the numerical optimization does not converge for the maximum number of iterations considered.  Similarly, it is difficult to determine $g_t, t=1,2, \ldots, n$ of  {\bf (B2)} for large $n$.  Nevertheless, the main point that Butman's strategy (\ref{butman_cs__a2})  is not optimal for finite $n$,  can be inferred from the simulations of {\bf (P2)} for $n=10$.

\begin{figure}[ht]
    \centering
    \includegraphics[width=0.8\columnwidth]{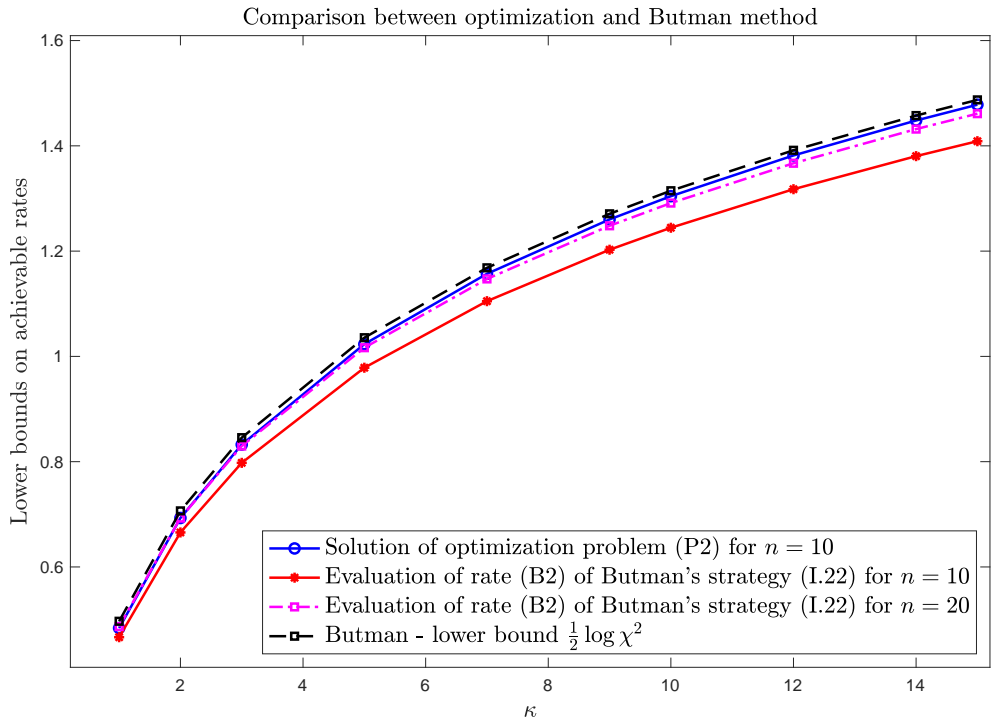}
    \caption{Numerical comparisons of rates based on {\bf (B2)}, {\bf (P2)} and asymptotic limit, called Batman's lower bound  based on {\bf (B)}. Problem {\bf (P2)} is solved using standard MATLAB optimization functions (for $n=10$). For the simulations, we used $K_{\Theta} = K_W = K_V =1$, $c =0.5$. The asymptotic for Butman's strategy seems to be optimal as $n\to \infty$, but it is not optimal for a finite $n$.}
    \label{fig:comparison2}
\end{figure}

On the other hand, the calculation of  asymptotic limit based on {\bf (B)},  gives a value which is higher than {\bf (B2)} calculated for $n=10$ and $n=20$.  This is expected, because the rate based on  {\bf (B2)}, i.e., $\frac{1}{n}C_n^{L,B2}(\kappa, v_0) $,   is nondecreasing with $n$.  On other hand, the   rate based on  {\bf (P2)}, i.e., $\frac{1}{n}C_n^{L,P2}(\kappa, v_0) $,   is also  nondecreasing with $n$,  but it  was not possible to compute it for large values of $n$, because i) $|g_1|, |g_2|, \ldots $ grows unbounded,  and  it is difficult to compute it numerically, even for moderate values of  $n$ beyond $10$, and ii) an  analytic expression of  {\bf (P2)} is not available.

\begin{observation} On Butman's strategy    (\ref{butman_cs__a2})  \\
Simulations show that Butman's strategy $ sgn(g_n) =-sgn(c g_{n-1})$, $n=2,3, \ldots$, i.e.,  which is used to obtain the recursion $\chi_n, n=1,2, \ldots$ is not optimal for finite $n$, since the numerical optimization problem {\bf (P2)} produces another sequence $g_n$ with higher value  of rate  (see Fig.~\ref{fig:comparison2}).  These simulations highlight the severe limitations of Butman's scheme, since it is sub-optimal for small number of transmissions $n$, and highly impractical for large number of transmissions $n$, because $|g_1|, |g_2|, \ldots$ grows  unbounded. 
\end{observation}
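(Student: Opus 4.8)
The plan is to separate the two assertions of the observation and prove them by different means: the ``highly impractical for large $n$'' claim is an immediate corollary of the ratio test already established, while the ``sub-optimal for small $n$'' claim reduces to a relaxation inequality between \textbf{(P2)} and \textbf{(B2)} together with a certified strict gap in one concrete instance. I would present it in that order so that the analytic content is isolated from the part that rests on the numerics of Fig.~\ref{fig:comparison2}.

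\emph{Unboundedness.} This follows directly from Theorem~\ref{thm:ip} once one observes that, for every $\kappa>0$, the positive root $\chi$ of the quartic (\ref{butman_cs_3}) satisfies $\chi>1$: dividing (\ref{butman_cs_3}) by $\chi^2$ gives $\chi^2=1+\frac{\kappa}{K_W}\bigl(1+|c|/\chi\bigr)^2>1$. Hence $\lim_{n}|g_n/g_{n-1}|=\chi>1$ and Theorem~\ref{thm:bg_0}(b) yields $\lim_n|g_n|=+\infty$. I would also point out that this blow-up is intrinsic, not an artifact of the sign rule: any sequence feasible for \textbf{(P2)} that attains a strictly positive per-symbol rate in the limit must have $\Sigma_t\to 0$ (otherwise the summand $\log\bigl(1+(g_t-cg_{t-1})^2\Sigma_{t-1}/K_W\bigr)$ cannot stay bounded away from $0$ while $g_t^2\Sigma_{t-1}=\kappa$), and then the pointwise power identity forces $g_t^2=\kappa/\Sigma_{t-1}\to\infty$. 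Thus the \textbf{(P2)}-optimal sequence is unbounded as well, which is precisely the impracticality claim.

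\emph{Sub-optimality for finite $n$.} First I would record that \textbf{(P2)} is a relaxation of \textbf{(B2)}: it keeps the same objective, the same pointwise identity $g_t^2\Sigma_{t-1}=\kappa$, and the same recursion (\ref{butman_cs_2_aap})--(\ref{butman_cs_2_bbp}), and merely drops Butman's sign restriction (\ref{butman_cs__a2}). Since Butman's sequence is feasible for \textbf{(P2)} and, on it, $(g_t-cg_{t-1})^2\Sigma_{t-1}=\kappa\bigl(1+|c|/\chi_t\bigr)^2$, the \textbf{(P2)} objective evaluated at Butman's point equals the \textbf{(B2)} value, so $\tfrac1n C_n^{L,P2}(\kappa,v_0)\ge \tfrac1n C_n^{L,B2}(\kappa,v_0)$ for every $n$. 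To upgrade this to a \emph{strict} inequality at finite $n$ I would (a) solve \textbf{(P2)} numerically in the instance of Fig.~\ref{fig:comparison2} ($K_\Theta=K_W=K_V=1$, $c=0.5$, $n=10$) and exhibit the resulting positive gap to a guaranteed precision, and (b) support it analytically by a feasible-perturbation argument: the greedy choice (\ref{butman_cs__a2}) maximizes $(g_t-cg_{t-1})^2$ at each $t$ and therefore \emph{minimizes} $\Sigma_t$, which through the power identity inflates the constraint-forced magnitude $|g_{t+1}|=\sqrt{\kappa/\Sigma_t}$ and hence every later rate term; computing the first-order change of the \textbf{(P2)} objective when one slightly trades the time-$t$ term against the time-$(t{+}1),\dots,n$ terms exhibits a feasible direction of strict increase, so Butman's point is not a maximizer of \textbf{(P2)}.

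\emph{Main obstacle.} The analytic step (b) is the hard one. Unlike the unboundedness claim, which is a one-line consequence of the ratio test, the finite-horizon sub-optimality is a genuinely non-greedy effect, and making it rigorous requires a sensitivity computation that tracks how a perturbation of a single $g_t$ propagates through all subsequent $\Sigma_s$ and, via $g_s^2\Sigma_{s-1}=\kappa$, through all subsequent $|g_s|$. In practice I expect the proof to lean on the numerical certificate (a), which is adequate precisely because \textbf{(P2)} admits no closed form and, by the unboundedness just established, cannot be solved numerically much beyond $n=10$ in any case.
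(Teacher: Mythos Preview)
Your proposal is correct and aligns with the paper's own treatment: the observation is not given a formal proof there either, but is supported exactly as you describe—unboundedness via Theorem~\ref{thm:ip} (the ratio test, with $\chi>1$ forced by the quartic), and finite-$n$ sub-optimality via the numerical comparison of \textbf{(P2)} against \textbf{(B2)} in Fig.~\ref{fig:comparison2}. Your relaxation remark that Butman's sign-restricted sequence is feasible for \textbf{(P2)} and yields the \textbf{(B2)} value is the right way to frame the inequality $\tfrac{1}{n}C_n^{L,P2}\ge\tfrac{1}{n}C_n^{L,B2}$, and the paper implicitly relies on the same feasibility.

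Where you go beyond the paper is in (i) arguing that unboundedness is intrinsic to \emph{any} \textbf{(P2)}-optimizer with positive limiting rate (using monotonicity of $\Sigma_t$ and the pointwise power identity), which the paper only illustrates numerically in Fig.~\ref{fig:comparison1}; and (ii) sketching a first-order perturbation argument for strict sub-optimality at finite $n$. The paper makes no attempt at (ii) and rests entirely on the numerical certificate—so your honest assessment that step (b) is the hard part and that the proof will ``lean on the numerical certificate (a)'' is exactly what happens in the paper. Your additions are sound framing but not required to match the paper's level of justification for what is, after all, labeled an \emph{observation}.
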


 \section{Independent Derivation of  Butman's Lower Bound and  Additional Discussion}
\label{lite}
In this section we provide the derivation of Theorem~\ref{thm_bg_1}, and then we show how to recover, as degenerate cases,  Butman's lower bounds.

%\subsection{Characterization of Wolfowitz's and Butman's Feedback Capacity of Linear Coding Schemes for Stable and   Unstable Noise}
We consider the AGN channel defined by (\ref{b_ar_1})- (\ref{b_ar_5_a}), without an initial state, and we derive the characterization of $C_n^L(\kappa, {\bf P}_{V_1})$, with coding scheme  defined by (\ref{ir})  and (\ref{butman_cs_1_new}), respectively,  (without an initial state).  The analysis below, shows that, Butman's equations can be obtained for any time-varying noise with $c_t \in (-\infty, \infty)$.
%  for the time-varying AR$(c_t), c\in (-\infty, \infty)$ noise (without an initial state).  

\begin{theorem} Preliminary characterization of $C_n^L(\kappa, {\bf P}_{V_1})$ \\ 
\label{thm_batman_gen}
Consider   the AGN, driven by a time-varying  AR$(c_t), t\in (-\infty,\infty)$ noise,  without initial state.\\
%$Y_n=X_n+V_n, n=1, 2, \ldots$ , i.e.,  $V_n=c_nV_{n-1} +W_n, W_n \in G(0, K_{W_n}), K_{W_n} \geq 0,     n=2, \ldots,$,    $V_1\in G(0,K_{V_1}), K_{V_1} \geq 0$, and $W_n, n=2,  \ldots$ independent of $V_1$. 
%Let $\Theta \in G(0, K_{\Theta})$ be  independent of $V_1, W_n, n=2, \ldots$. \\
 Define  the  conditional mean and error covariance by
\begin{align}
\widehat{\Theta}_{t} \tri & {\bf E}\Big\{\Theta\Big|Y^{t}\Big\}, \hst  \Sigma_t\tri {\bf E}\Big\{ \Big(\Theta - \widehat{\Theta}_t\Big)^2\Big|Y^t\Big\}, \hso t=1, \ldots, n.
%, \\
%\widehat{\Theta}_0=& 0, \hst \Sigma_0= K_{\Theta}=1.
\end{align}
Consider the  linear coding scheme
\begin{align}
X_t=& g_t\Big(\Theta - \widehat{\Theta}_{t-1}\Big), \hso t=2, \ldots, n, \hso X_1=g_1 \Theta, \label{inn_b_a}  \\
Y_t =& g_t\Big(\Theta - \widehat{\Theta}_{t-1}\Big) +V_t,   \hso t=2, \ldots, n,  \\
=& g_t\Big(\Theta - \widehat{\Theta}_{t-1}\Big) -c_tg_{t-1}\Big(\Theta - \widehat{\Theta}_{t-2}\Big) +c Y_{t-1}+ W_t  \hst \mbox{by $V_{t-1}=Y_{t-1}-X_{t-1}$} \label{inn_b_b}\\
 Y_1=& g_1\Theta  +V_1, \\
{\bf E}&\big\{\sum_{t=1}^n \big(X_t\big)^2\big\}= \frac{1}{n} \sum_{t=1}^n \kappa_t \leq \kappa,  \hst  \kappa_t =g_t^2 \Sigma_{t-1}, \hso t=2, \ldots, n,\hst \kappa_1=g_1^2 K_{\Theta}.  \label{inn_b_c}
\end{align}
Then the following hold.\\
(a) The innovations process of $Y^n$ denoted by $I^n$ is   an  orthogonal Gaussian  process, $I_t \in G(0, K_{I_t}), t=1, \ldots$,  given by
 \begin{align}
 I_t \tri & Y_t -{\bf E}\Big\{Y_t\Big|Y^{t-1}\Big\}, \hst t=2, \ldots, n \label{inn_b_1}    \\
 =&\Big(g_t -c_t g_{t-1}\Big)\Big(\Theta - \widehat{\Theta}_{t-1}\Big)+W_t,\\
 I_1=& g_1 \Theta +V_1. \label{inn_b_2}
 \end{align}
 where 
 \begin{align}
 K_{I_t} \tri & {\bf E} \Big\{ \big(I_t\Big)^2\Big\}
=  \Big(g_t-c_t g_{t-1}\Big)^2 {\bf E}\Big\{ \Big( \Theta - \widehat{\Theta}_{t-1}\big\}\Big)^2\Big\}  + K_{W_t}, \hst t=2,3, \ldots,  \label{inn_b_3}\\
=&  \Big(g_t-c_t g_{t-1}\Big)^2\Sigma_{t-1}+K_{W_t}\\
K_{I_1} \tri & {\bf E}\Big\{\big(I_1\Big)^2\Big\}
=g_1^2K_{\Theta} + K_{V_1} \label{inn_b_4}
\end{align}  
and where the mean-square error     $\Sigma_t$ and estimate $\widehat{\Theta}_t$ satisfy the recursions 
\begin{align}
&\Sigma_t= \frac{K_{W_t}\Sigma_{t-1}}{\Big(g_t-c_t g_{t-1}\Big)^2 \Sigma_{t-1} +K_{W_t}}, \hst t=2, \ldots, n, \label{inn_b_5}   \\
&\Sigma_1= \frac{K_{\Theta}K_{V_1}}{g_1^2 K_{\Theta}+K_{V_1}}, \label{inn_b_6} \\
&\widehat{\Theta}_t= \widehat{\Theta}_{t-1} +\frac{ \Big(g_t-c_t g_{t-1}\Big)\Sigma_{t-1}}{\Big(g_t-c_t g_{t-1}\Big)^2\Sigma_{t-1}+K_{W_t}}I_t, \hst t=2, \ldots, n, \label{inn_b_7} \\
&\widehat{\Theta}_1=\frac{ g_1 K_{\Theta}}{g_1^2K_{\Theta}+K_{V_1}}I_1. \label{inn_b_8}
\end{align}
 The mutual information between the Gaussian RV $\Theta$ and $Y^n$,  is given by
 \begin{align}
I(\Theta; Y^{n}) =& H(Y^n)-  H(V^n)\label{inn_b_9}\\
=&\sum_{t=1}^n H(I_t)- H(V_1) -\sum_{t=2}^n H(W_t) \label{inn_b_10}\\
=&\frac{1}{2 } \log \Big( \frac{g_1^2 K_{\Theta}  +K_{V_1}}{K_{V_1}} \Big) +\frac{1}{2}\sum_{t=2}^n \log \Big( \frac{\Big(g_t-c_t g_{t-1}\Big)^2 \Sigma_{t-1}+K_{W_t}}{K_{W_t}}  \Big)\label{inn_b_11_a}\\
=&\frac{1}{2 } \log \Big( \frac{ \kappa_1 +K_{V_1}}{K_{V_1}} \Big) +\frac{1}{2}\sum_{t=2}^n \log \Big( \frac{\Big(1-c_t \frac{g_{t-1}}{g_t}\Big)^2 \kappa_t+K_{W_t}}{K_{W_t}}  \Big). \label{inn_b_11}
\end{align}
%and the average power is
%\begin{align}
%{\bf E}\big\{\sum_{t=1}^n \big(X_t\big)^2\big\}= \frac{1}{n} \sum_{t=1}^n \kappa_t \leq \kappa, \hst \kappa_1=g_1^2 K_{\Theta}, \hst  \kappa_t =g_t^2 \Sigma_{t-1}, \hso t=2, \ldots, n.\label{inn_b_12}
%\end{align}
(b)  Suppose  $K_{W_t}=K_W, t=2, \ldots, n$. Then  
\begin{align}
\Sigma_t =& \frac{K_W K_{\Theta} K_{V_1}}{g_1^2 K_W K_{\Theta} +\sum_{j=2}^{t}\big(g_j-c_jg_{j-1}\big)^2K_{\Theta}K_{V_1}+K_W K_{V_1} },  \hso t=2,\ldots, n, \label{inn_b_13}\\
\Sigma_1 = &\frac{ K_{\Theta} K_{V_1}}{g_1^2 K_{\Theta} + K_{V_1} },\\
{\bf E}\Big\{ & \sum_{t=1}^n \big(X_t\big)^2\Big\}= \frac{1}{n} \sum_{t=1}^n \kappa_t \leq \kappa, \\ 
\kappa_t =& \frac{g_t^2 K_W K_{\Theta} K_{V_1}}{g_1^2K_W K_\Theta+\sum_{j=2}^{t-1} \Big(g_j-c_j g_{j-1}\Big)^2 K_\Theta K_{V_1}+ K_W K_{V_1}}, \hso t=2, \ldots, n, \label{inn_b_14}\\
 \kappa_1 =& \frac{g_1^2 K_{\Theta} K_{V_1}}{g_1^2 K_\Theta+  K_{V_1}}\label{inn_b_14a}\\
I(\Theta; Y^{n})    =&\frac{1}{2 } \log \Big( \frac{g_1^2 K_\Theta +K_{V_1}}{K_{V_1}} \Big) +\frac{1}{2}\sum_{t=2}^n \log \Big(  \big(1-c_t \frac{g_{t-1}}{g_t}\big)^2  \frac{\kappa_t}{K_{W}K_\Theta K_{V_1}}  +1  \Big)\label{inn_b_15} \\
=& \frac{1}{2 } \log \Big( \frac{g_1^2 K_\Theta +K_{V_1}}{K_{V_1}} \Big) +\frac{1}{2}\sum_{t=2}^n \log \Big( \chi_{t+1} \frac{\kappa_t}{\kappa_{t+1}} \Big)\label{inn_b_16} 
\end{align}
where $\chi_t\tri \frac{g_t}{g_{t-1}}$, and 
\begin{align}
\chi_{t+1}= \Big\{1 + \Big(1-c_t \frac{1}{\chi_t}\Big)^2 \frac{\kappa_t}{K_{W_t}K_{\Theta}K_{V_1}}\Big\} \frac{\kappa_{t+1}}{\kappa_t}, \hso t=2,\ldots, n, \hso \chi_2= \frac{g_2}{g_1}.\label{inn_b_17} 
\end{align}
\end{theorem}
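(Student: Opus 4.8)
The plan is to treat the system $(\Theta, V_1, W_2, \ldots, W_n)$ as jointly Gaussian and run a classical Kalman/innovations computation. First I would establish the innovations formula in part (a). Substituting $V_{t-1}=Y_{t-1}-X_{t-1}$ into $V_t=c_tV_{t-1}+W_t$ gives the representation (\ref{inn_b_b}), $Y_t=g_t(\Theta-\widehat\Theta_{t-1})-c_tg_{t-1}(\Theta-\widehat\Theta_{t-2})+c_tY_{t-1}+W_t$. Conditioning on $Y^{t-1}$: the terms $c_tY_{t-1}$, $g_{t-1}\widehat\Theta_{t-1}$, $g_{t-1}\widehat\Theta_{t-2}$ are $Y^{t-1}$-measurable, $W_t$ is zero-mean and independent of $(\Theta,Y^{t-1})$, and ${\bf E}\{\Theta\,|\,Y^{t-1}\}=\widehat\Theta_{t-1}$; the two conditional-mean contributions combine, and $I_t=Y_t-{\bf E}\{Y_t\,|\,Y^{t-1}\}=(g_t-c_tg_{t-1})(\Theta-\widehat\Theta_{t-1})+W_t$, which is (\ref{inn_b_1}), with the degenerate case $I_1=Y_1=g_1\Theta+V_1$. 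Joint Gaussianity makes $I_t$ uncorrelated with, hence independent of, $Y^{t-1}$, so $\{I_t\}$ is an orthogonal Gaussian sequence; since $W_t\perp(\Theta,Y^{t-1})$ the cross term drops and $K_{I_t}=(g_t-c_tg_{t-1})^2\Sigma_{t-1}+K_{W_t}$, $K_{I_1}=g_1^2K_\Theta+K_{V_1}$, which is (\ref{inn_b_3})--(\ref{inn_b_4}).

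Next I would derive the filter recursions. Because $Y^{t-1}\perp I_t$ and all means are zero, $\widehat\Theta_t={\bf E}\{\Theta\,|\,Y^{t-1},I_t\}=\widehat\Theta_{t-1}+\dfrac{{\bf E}\{(\Theta-\widehat\Theta_{t-1})I_t\}}{K_{I_t}}I_t$, and ${\bf E}\{(\Theta-\widehat\Theta_{t-1})I_t\}=(g_t-c_tg_{t-1})\Sigma_{t-1}$, giving the gain in (\ref{inn_b_7}); the companion identity $\Sigma_t=\Sigma_{t-1}-[(g_t-c_tg_{t-1})\Sigma_{t-1}]^2/K_{I_t}$ simplifies to (\ref{inn_b_5}), and the scalar Bayes--Gauss update for $Y_1=g_1\Theta+V_1$ yields (\ref{inn_b_6}) and (\ref{inn_b_8}). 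For the mutual information I use $I(\Theta;Y^n)=H(Y^n)-H(Y^n|\Theta)$: the innovations are a causal, lower-triangular, unit-Jacobian reparametrization of $Y^n$, so $H(Y^n)=\sum_{t=1}^n H(I_t)$; and given $\Theta$ the map $V^n\mapsto Y^n$ defined by $Y_t=g_t(\Theta-\widehat\Theta_{t-1}(Y^{t-1}))+V_t$ is again causal with unit Jacobian (this uses strict causality of the encoder $X_t=e_t(v_0,\Theta,Y^{t-1})$), so $H(Y^n|\Theta)=H(V^n)=H(V_1)+\sum_{t=2}^n H(W_t)$ by $V_t=c_tV_{t-1}+W_t$. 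Collecting Gaussian differential entropies gives (\ref{inn_b_11_a}), and rewriting $(g_t-c_tg_{t-1})^2\Sigma_{t-1}=(1-c_tg_{t-1}/g_t)^2\kappa_t$ with $\kappa_t=g_t^2\Sigma_{t-1}$ gives (\ref{inn_b_11}).

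Finally, part (b) follows by iterating the scalar recursion (\ref{inn_b_5}) with $K_{W_t}\equiv K_W$: taking reciprocals linearizes it, $1/\Sigma_t=1/\Sigma_{t-1}+(g_t-c_tg_{t-1})^2/K_W$, hence $1/\Sigma_t=1/\Sigma_1+K_W^{-1}\sum_{j=2}^t(g_j-c_jg_{j-1})^2$ with $1/\Sigma_1=g_1^2/K_{V_1}+1/K_\Theta$; clearing denominators produces (\ref{inn_b_13}), and $\kappa_t=g_t^2\Sigma_{t-1}$ produces (\ref{inn_b_14})--(\ref{inn_b_14a}). Substituting into (\ref{inn_b_11}) and introducing $\chi_t=g_t/g_{t-1}$ recasts the $t$-th log-term in the form of (\ref{inn_b_16}), where $\chi_{t+1}$ is defined precisely by the algebraic identity (\ref{inn_b_17}), obtained by equating $K_{I_{t+1}}$ to the normalized innovation variance and dividing through by $\kappa_t$.

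The step I expect to be the main obstacle is not any single estimate but the bookkeeping at two points: (i) verifying that the conditional-mean terms in ${\bf E}\{Y_t\,|\,Y^{t-1}\}$ cancel so as to leave exactly $(g_t-c_tg_{t-1})(\Theta-\widehat\Theta_{t-1})+W_t$ — in particular that the $\widehat\Theta_{t-2}$ contribution is absorbed — and (ii) justifying the two unit-Jacobian changes of variables underlying $H(Y^n)=\sum_t H(I_t)$ and $H(Y^n|\Theta)=H(V^n)$, both of which hinge on the strict causality of the encoder map. Everything else reduces to routine Gaussian linear algebra.
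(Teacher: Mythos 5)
Your proposal is correct and follows essentially the same route as the paper: the innovations representation via $V_{t-1}=Y_{t-1}-X_{t-1}$, the Gaussian conditional mean/covariance recursions, the entropy decomposition $I(\Theta;Y^n)=\sum_t H(I_t)-H(V_1)-\sum_{t\ge 2}H(W_t)$, and the $\chi_t$ recursion in part (b). The only cosmetic difference is that you linearize the $\Sigma_t$ recursion by taking reciprocals, whereas the paper tracks the equivalent quantity $SNR_t$ directly (following Butman); your explicit unit-Jacobian justification of the two entropy identities is a point the paper leaves implicit.
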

\begin{proof} (a) By  
\begin{align}
{\bf E}\Big\{Y_t\Big|Y^{t-1}\Big\}=&-c_tg_{t-1} \widehat{\Theta}_{t-1} +c_tg_{t-1} {\bf E}\Big\{\widehat{\Theta}_{t-2}\Big|Y^{t-1}\Big\} + c_tY_{t-1}\\
=& -c_tg_{t-1} \widehat{\Theta}_{t-1} +c_tg_{t-1}\widehat{\Theta}_{t-2} + c_tY_{t-1}
\end{align}
and substituting into $I_t$ we obtain (\ref{inn_b_1}), (\ref{inn_b_2}). Then (\ref{inn_b_3}), (\ref{inn_b_4}) are directly obtained from the independence of $\Theta$ and $W_t, t=2, \ldots, n, V_1$. By mean-square estimation theory, the conditional covariance of $\Theta$ given $Y^t$, denoted by $cov(\Theta, \Theta\Big|Y^{t})$ is,
\begin{align}
cov(\Theta, \Theta\Big|Y^{t})=&cov(\Theta, \Theta\Big|Y^{t-1})-\Big( cov(\Theta,Y_t\Big|Y^{t-1})\Big)^2 \Big( cov(Y_t,Y_t\Big|Y^{t-1})\Big)^{-1} \\
=& cov(\Theta, \Theta\Big|Y^{t-1})-\frac{\Big(g_t-c_tg_{t-1}\Big)^2 \Big(cov(\Theta, \Theta\Big|Y^{t-1})\Big)^2}{ \Big(g_t-c_tg_{t-1}\Big)^2 \Big(cov(\Theta, \Theta\Big|Y^{t-1})\Big)+ K_{W_t}}, \hso t=2, \ldots, n,\\
 cov(\Theta, \Theta\Big|Y_1)=& \frac{K_\Theta K_{V_1}}{g_1^2 K_\Theta +K_{V_1}}. \label{but_proof_1}
\end{align}
From the above follows, $cov(\Theta, \Theta\Big|Y^{t})= {\bf E}\Big\{ \Big(\Theta - \widehat{\Theta}_t\Big)^2\Big\}=\Sigma_t$. Hence, (\ref{but_proof_1}), reduces to
\begin{align}
\Sigma_t=&\Sigma_{t-1} - \frac{\Big(g_t-c_t g_{t-1}\Big)^2 \Sigma_{t-1}^2}{ \Big(g_t-c_t g_{t-1}\Big)^2 \Sigma_{t-1}+ K_{W_t}}, \hso t=2, \ldots, n,    \label{but_proof_2}  \\ \Sigma_1=&\frac{K_\Theta K_{V_1}}{g_1^2 K_\Theta +K_{V_1}}. \label{but_proof_3}
\end{align}
From (\ref{but_proof_2}), (\ref{but_proof_3}) then follows   (\ref{inn_b_5}),  (\ref{inn_b_6}). The conditional  mean of  of $\Theta$ given $Y^t$ is, 
\begin{align}
&{\bf E}\Big\{\Theta\Big|Y^t\Big\}= {\bf E}\Big\{\Theta\Big|Y^{t-1}\Big\}+ \Big( cov(\Theta,Y_t\Big|Y^{t-1})\Big) \Big( cov(Y_t,Y_t\Big|Y^{t-1})\Big)^{-1}\Big(Y_t - {\bf E}\Big\{Y_t\Big|Y^{t-1}\Big\}\Big), \hso t=2, \ldots, n, \label{but_proof_4}    \\
&{\bf E}\Big\{\Theta\Big|Y_t\Big\}={\bf E}\Big\{\Theta\Big\}+ cov(\Theta, Y_1)\Big(cov(Y_1,Y_1)\Big)^{-1}\Big(Y_1 - {\bf E}\Big\{Y_1\Big\}\Big).\label{but_proof_5} 
\end{align}
From  (\ref{but_proof_4}),  (\ref{but_proof_5}) and the conditional variance above, then follows (\ref{inn_b_7}), (\ref{inn_b_8}).  From the above  then follows (\ref{inn_b_9})-(\ref{inn_b_11}). (b) Using the assumptions we deduce   (\ref{inn_b_13})-(\ref{inn_b_15}). To show (\ref{inn_b_16}), (\ref{inn_b_17}) we follow Butman \cite{butman1969}. Let
\begin{align}
SNR_t =&g_1^2 K_W K_\Theta + \sum_{j=2}^t \Big(g_j-c_t g_{j-1}\Big)^2 K_\Theta K_{V_1}, \hso t=1, 2, \ldots, n\\
=& SNR_{t-1} +g_t^2 \Big(1-c\frac{g_{t-1}}{g_t}\Big)^2 K_\Theta K_{V_1}.
\end{align}
Hence, 
\begin{align}
 SNR_t+K_W K_{V_1}=&  SNR_{t-1} +g_t^2 \Big(1-c_t\frac{g_{t-1}}{g_t}\Big)^2 K_\Theta K_{V_1}+K_W K_{V_1}\\
=&\Big( SNR_{t-1}+K_W K_{V_1}\Big)\Big(1 + \frac{g_t^2}{K_W K_{V_1}+ SNR_{t-1}}\big(1-c_t \frac{g_{t-1}}{g_t}\big)^2\Big)\\
=&\Big(SNR_{t-1}+K_W K_{V_1} \Big)\Big(1 + \frac{\kappa_t}{K_WK_\Theta K_{V_1}}\big(1-c_t \frac{g_{t-1}}{g_t}\big)^2\Big).
\end{align}
Letting $\chi_t=\frac{g_t}{g_{t-1}}$, then 
\begin{align}
1 + \frac{\kappa_t}{K_WK_\Theta K_{V_1}} \Big(1-c_t \frac{g_{t-1}}{g_t}\Big)^2= \frac{K_W K_{V_1} +SNR_t}{K_W K_{V_1} +SNR_{t-1}}= \frac{\kappa_t}{\kappa_{t+1}} \chi_{t+1}.
\end{align}
From the above follows (\ref{inn_b_16}),   (\ref{inn_b_17}).
\end{proof}

From Theorem~\ref{thm_batman_gen} follows an analogous preliminary characterization, for the AR$(c_t;v_0), t\in (-\infty,\infty)$ noise, i.e., when $V_0=v_0$ is the initial state known to the encoder, as stated in the next corollary.

\begin{corollary} Preliminary characterization of $C_n^L(\kappa, v_0)$ \\ 
\label{cor_batman_gen_a}
Consider the statement of Theorem~\ref{thm_batman_gen}, with  the AGN channel driven by a time-varying  AR$(c_t;v_o)$ noise.\\
% with initial state, i.e., $Y_n=X_n+V_n, n=1, 2, \ldots$, where $V_n=c_nV_{n-1} +W_n, W_n \in G(0, K_{W_n}), K_{W_n} \geq 0,     n=1, \ldots,$ are independent RVs, independent of $\Theta \in G(0, K_{\Theta})$.\\
Define  the  conditional mean and error covariance for fixed $V_0=v_0$, by
\begin{align}
\widehat{\Theta}_{t} \tri & {\bf E}\Big\{\Theta\Big|Y^{t}, V_0=v_0\Big\}, \hst  \Sigma_t\tri {\bf E}\Big\{ \Big(\Theta - \widehat{\Theta}_t\Big)^2 \Big|Y^t, V_0=v_0 \Big\}, \hso t=1, \ldots, n.
%, \\
%\widehat{\Theta}_0=& 0, \hst \Sigma_0= K_{\Theta}=1.
\end{align}
Consider the  linear coding scheme
\begin{align}
X_t=& g_t\Big(\Theta - \widehat{\Theta}_{t-1}\Big), \hso t=2, \ldots, n, \hso X_1=g_1 \Theta, \label{inn_b_a}  \\
Y_t =& g_t\Big(\Theta - \widehat{\Theta}_{t-1}\Big) +c_tV_{t-1} +W_t,   \hso t=2, \ldots, n,   \\
=& g_t\Big(\Theta - \widehat{\Theta}_{t-1}\Big) -c_tg_{t-1}\Big(\Theta - \widehat{\Theta}_{t-2}\Big) +c_t Y_{t-1}+ W_t  \hst \mbox{by $V_{t-1}=Y_{t-1}-X_{t-1}$} \label{inn_b_b}\\
Y_1=& g_1\Theta  +c_0 V_0+W_1, \\
{\bf E}&\big\{\sum_{t=1}^n \big(X_t\big)^2\Big|V_0=v_0\big\}= \frac{1}{n} \sum_{t=1}^n \kappa_t \leq \kappa,  \hst  \kappa_t =g_t^2 \Sigma_{t-1}, \hso t=2, \ldots, n,\hst \kappa_1=g_1^2 K_{\Theta}.  \label{inn_b_c}
\end{align}
Then the  statements of Theorem~\ref{thm_batman_gen}.(a), (b) hold, with the following changes:  all conditional expectations are replaced by conditional expectations for a fixed $V_0=v_0$,  $I(\Theta; Y^n)$ is replaced by $I(\Theta; Y^n|v_0)= H(Y^n|v_0)-H(V^n|v_0)$, and $\Sigma_1$ is replaced by $\Sigma_1=\frac{K_\Theta K_{W_1}}{g_1^2 K_\Theta +K_{W_1}}$.
\end{corollary}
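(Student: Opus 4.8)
The plan is to obtain this corollary as a change--of--variables consequence of Theorem~\ref{thm_batman_gen}, by conditioning throughout on the event $\{V_0=v_0\}$ and exploiting that, by (\ref{b_ar_5_a}), the only randomness that enters the analysis — the message $\Theta$ and the innovations $W_1,\ldots,W_n$ — is independent of $V_0$. First I would fix $v_0$ and replace every expectation, covariance, differential entropy and mutual information appearing in the proof of Theorem~\ref{thm_batman_gen} by its counterpart conditioned on $V_0=v_0$. Since $(\Theta,W_1,\ldots,W_n)$ is independent of $V_0$, its joint law given $V_0=v_0$ equals its unconditional law, so every step goes through unchanged except for what happens at the first channel use.

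Next I would isolate the single place where the initial state enters, namely $Y_1=g_1\Theta+c_0V_0+W_1$. Given $V_0=v_0$ this reads $Y_1=g_1\Theta+c_0v_0+W_1$, i.e.\ the same affine--in--$\Theta$ observation as in Theorem~\ref{thm_batman_gen} but with the Gaussian term now $W_1$ of variance $K_{W_1}$ in place of $V_1$ of variance $K_{V_1}$, plus an additional \emph{known} deterministic offset $c_0v_0$. Because the decoder knows $v_0$, subtracting this offset is an invertible affine transformation of $Y_1$; mutual information $I(\Theta;Y^n|v_0)$ is invariant under it, and the conditional--Gaussian (Kalman) recursions for $\Sigma_t$ and $\widehat{\Theta}_t$ depend only on second--order statistics, which the constant shift leaves untouched. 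Consequently the innovations identities (\ref{inn_b_1})--(\ref{inn_b_4}), the recursions (\ref{inn_b_5})--(\ref{inn_b_8}), and the mutual--information chain (\ref{inn_b_9})--(\ref{inn_b_11}) all carry over verbatim with $K_{V_1}$ replaced by $K_{W_1}$, hence $\Sigma_1=K_\Theta K_{W_1}/(g_1^2K_\Theta+K_{W_1})$, and with $H(\cdot)$ replaced by the conditional differential entropy $H(\cdot\,|v_0)$, giving $I(\Theta;Y^n|v_0)=H(Y^n|v_0)-H(V^n|v_0)$; part (b) then follows by the same specialization $K_{W_t}=K_W$ as before.

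The only point requiring genuine care is the bookkeeping showing that the conditional process $V^n$ given $V_0=v_0$ is still jointly Gaussian with \emph{the same} covariance matrix as in the no--initial--state case (only its mean is shifted by the deterministic sequence $c_1v_0,\,c_2c_1v_0,\ldots$), so that $H(V^n|v_0)=H(W_1)+\sum_{t=2}^n H(W_t)$ with $K_{W_1}$ in the role previously played by $K_{V_1}$; combined with the translation invariance of differential entropy and of mutual information, this closes the argument. No new estimate is needed — the corollary is essentially Theorem~\ref{thm_batman_gen} re-read under the conditioning $V_0=v_0$.
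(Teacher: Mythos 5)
Your proposal is correct and takes essentially the same route as the paper, whose entire proof is the one-line remark that the corollary ``follows by repeating the derivation of Theorem~\ref{thm_batman_gen}''; your write-up simply makes that repetition explicit. In particular, you correctly identify the only substantive changes — conditioning everything on $V_0=v_0$, treating $c_0v_0$ as a known offset that translation-invariance of differential entropy and mutual information renders harmless, and replacing the variance $K_{V_1}$ by $K_{W_1}$ in the $t=1$ step so that $\Sigma_1=K_\Theta K_{W_1}/(g_1^2K_\Theta+K_{W_1})$.
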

 \begin{proof}
 This follows by repeating the derivation of Theorem~\ref{thm_batman_gen}.
 \end{proof}

From Theorem~\ref{thm_batman_gen} follows the validity of      Theorem~\ref{thm_bg_1}.

\begin{proposition}  Theorem~\ref{thm_bg_1} is correct.
\end{proposition}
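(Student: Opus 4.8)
The plan is to deduce Theorem~\ref{thm_bg_1} directly from the preliminary characterisations in Theorem~\ref{thm_batman_gen} and Corollary~\ref{cor_batman_gen_a}. The structural fact that makes this reduction legitimate is that the linear coding scheme (\ref{butman_cs_1_new}) is \emph{exactly} the scheme analysed in Theorem~\ref{thm_batman_gen} (respectively Corollary~\ref{cor_batman_gen_a}): writing $\widehat\Theta_{t-1}={\bf E}\{\Theta\mid Y^{t-1}\}$ (respectively ${\bf E}\{\Theta\mid Y^{t-1},V_0=v_0\}$), a scheme of the form (\ref{butman_cs_1_new}) is parametrised bijectively by the nonrandom gain sequence $(g_1,\ldots,g_n)$. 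By Butman's structural reduction quoted just before (\ref{butman_cs_1_new}), the supremum in (\ref{ir}) over the abstract class of linear encoders equals the supremum over schemes of the form (\ref{butman_cs_1_new}), hence the supremum over gain sequences $(g_1,\ldots,g_n)$ subject to the power constraint; there is nothing else to optimise.

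First I would treat part (a). Fix any gain sequence $(g_1,\ldots,g_n)$. Theorem~\ref{thm_batman_gen}(a) supplies, for this scheme: the error covariances $\Sigma_t$ obey the scalar Riccati recursion (\ref{inn_b_5})--(\ref{inn_b_6}), which is verbatim (\ref{inn_b_5_in})--(\ref{inn_b_6_in}); the channel-input power equals $\frac1n\sum_{t=1}^n\kappa_t$ with $\kappa_1=g_1^2K_\Theta$ and $\kappa_t=g_t^2\Sigma_{t-1}$ for $t\ge2$, so the feasibility set $\{\frac1n{\bf E}\{\sum_{t=1}^n(X_t)^2\}\le\kappa\}$ is precisely $\{\frac1n(g_1^2K_\Theta+\sum_{t=2}^n g_t^2\Sigma_{t-1})\le\kappa\}$; and $I(\Theta;Y^n)$ equals (\ref{inn_b_11_a}). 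Substituting $\kappa_t=g_t^2\Sigma_{t-1}$ and extracting $g_t^2$ from $(g_t-c_tg_{t-1})^2=g_t^2(1-c_tg_{t-1}/g_t)^2$ converts (\ref{inn_b_11_a}) into exactly the objective displayed in Theorem~\ref{thm_bg_1}(a). Taking the supremum over the feasible gain sequences yields the asserted formula for $C_n^L(\kappa,{\bf P}_{V_1})$.

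Part (b) is the same computation with the total-power feasibility set replaced by the pointwise set $\{g_1^2K_\Theta\le\kappa_1,\ g_t^2\Sigma_{t-1}\le\kappa_t,\ t=2,\ldots,n\}$; the objective, the Riccati recursion and the identity $\kappa_t=g_t^2\Sigma_{t-1}$ are unchanged, so one obtains $C_n^L(\kappa_1,\ldots,\kappa_n,{\bf P}_{V_1})$. For part (c) I would repeat the argument using Corollary~\ref{cor_batman_gen_a} instead of Theorem~\ref{thm_batman_gen}: all expectations are now conditioned on $V_0=v_0$, $I(\Theta;Y^n)$ is replaced by $I(\Theta;Y^n\mid v_0)=H(Y^n\mid v_0)-H(V^n\mid v_0)$, and since the conditional law of $V_1$ given $V_0=v_0$ is that of $W_1$ shifted by a deterministic constant (which alters neither differential entropies nor mean-square errors), every formula carries over with the single replacement of $K_{V_1}$ by $K_{W_1}$. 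The claimed equalities $C_n^L(\kappa,{\bf P}_{V_1})=C_n^L(\kappa,v_0)$ and $C_n^L(\kappa_1,\ldots,\kappa_n,{\bf P}_{V_1})=C_n^L(\kappa_1,\ldots,\kappa_n,v_0)$ are then just the statement that, after this replacement, the with-initial-state characterisation coincides with that of parts (a)--(b).

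\textbf{Main obstacle.} The only genuinely substantive ingredient --- namely that under the scheme (\ref{butman_cs_1_new}) the innovations of $Y^n$ form an orthogonal Gaussian sequence and that the conditional-error dynamics close into the scalar Riccati form (\ref{inn_b_5})--(\ref{inn_b_6}) --- has already been carried out inside Theorem~\ref{thm_batman_gen}. What remains for this Proposition is therefore bookkeeping, and the only points requiring attention are: (i) justifying that the supremum over the abstract linear encoders in (\ref{ir}) coincides with the supremum over gain sequences, which is precisely Butman's structural reduction, and (ii) tracking the $t=1$ boundary term correctly ($K_{V_1}$ versus $K_{W_1}$) when moving between the two channel models.
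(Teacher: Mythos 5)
Your proposal is correct and follows essentially the same route as the paper: the paper's proof simply invokes Theorem~\ref{thm_batman_gen}.(a) for parts (a) and (b) (swapping the total for the pointwise power constraint) and Corollary~\ref{cor_batman_gen_a} for part (c). You merely spell out the bookkeeping (the substitution $\kappa_t=g_t^2\Sigma_{t-1}$, the factorization $(g_t-c_tg_{t-1})^2=g_t^2(1-c_tg_{t-1}/g_t)^2$, and the $K_{V_1}$ versus $K_{W_1}$ boundary term) that the paper leaves implicit.
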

\begin{proof} (a) By Theorem~\ref{thm_batman_gen}.(a) follows directly that optimization problem $C_n^L(\kappa, {\bf P}_{V_1})$ is as stated. \\
(b) This also follows from Theorem~\ref{thm_batman_gen}.(a), by replacing the total average power constraint by the pointwise average power constraint. \\
(c) This follows from Corollary~\ref{cor_batman_gen_a}, as in parts (a), (b).
\end{proof}

By Theorem~\ref{thm_batman_gen}, follows that $I(\Theta; Y^n)$ is maximized by the choice of the sequence $g_1, g_2, \ldots, g_n$ that controls the mean-square error $\Sigma_t, t=1, \ldots, n$, and satisfies the average power constraint. In general, this is a dynamic optimization problem, with state variable the sequence  $\Sigma_t, t=1, \ldots, n$.

From Theorem~\ref{thm_batman_gen}, we recover   Wolfowitz's \cite{wolfowitz1975} and Butman's \cite{butman1976}  lower bound, and more importantly the assumptions based on which the lower bound is derived. 

\begin{proposition} Reduction  of Theorem~\ref{thm_batman_gen}.(c) to  Butman's lower bound \\
\label{prop_but}
(a) The statements of  Theorem~\ref{thm_batman_gen}.(c), with $K_\Theta=1, K_{V_1}=K_W, c \in [-1,1]$, and strategy $g_n$ that satisfies,  (\ref{butman_cs__a2}), i.e.,  $sgn(g_n) =-sgn(c g_{n-1}), n=2,3, \ldots$, reduce to analogous statements derived by Butman \cite{butman1976}, i.e.,  (\ref{ihara_8})-(\ref{ihara_9}). \\
(a) If in addition to (a), i.e., $sgn(g_n) =-sgn(c g_{n-1}), n=2,3, \ldots$, the average power contraint is replaced by ${\bf E}\Big\{\big(X_t\big)^2\Big\} = \kappa$, $ t=1,2, \ldots, n$ (as in Butman \cite{butman1976}) then  problem  (\ref{ihara_8_aa})-(\ref{ihara_8_aaa}) is obtained.  
\end{proposition}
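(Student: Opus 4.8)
The plan is to obtain Butman's relations as a direct specialization of the characterization already established in this section, followed by the algebraic simplification that the sign condition (\ref{butman_cs__a2}) produces. For part~(a) I would start from Theorem~\ref{thm_batman_gen} (equivalently Corollary~\ref{cor_batman_gen_a} when the initial state $V_0=v_0$ is present) and substitute $K_\Theta=1$, $c_t=c\in[-1,1]$, $K_{W_t}=K_W$ for all $t$, and $K_{W_1}=K_W$ in place of $K_{V_1}$. After cancelling the common factors of $K_W$, the error recursion (\ref{inn_b_5})--(\ref{inn_b_6}) collapses to $\Sigma_t=K_W/(g_1^2+\sum_{j=2}^t(g_j-cg_{j-1})^2+K_W)$ with $\Sigma_1=K_W/(g_1^2+K_W)$, i.e.\ to (\ref{butman_cs_2_a})--(\ref{butman_cs_2_b}), with $\kappa_t=g_t^2\Sigma_{t-1}$ and $\kappa_1=g_1^2$; the mutual-information formula (\ref{inn_b_11}) becomes $I(\Theta;Y^n|v_0)=\frac12\log(1+\kappa_1/K_W)+\frac12\sum_{t=2}^n\log(1+(1-c\,g_{t-1}/g_t)^2\kappa_t/K_W)$.

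Next I would feed in the sign condition. The case $c=0$ is immediate, since then $(g_j-cg_{j-1})^2=g_j^2$ and the formulas are already in Butman's form, so assume $c\neq0$; then $sgn(g_j)=-sgn(cg_{j-1})$ forces $c\,g_{j-1}/g_j<0$, whence $1-c\,g_{j-1}/g_j=1+|c|\,|g_{j-1}/g_j|=1+|c|/\chi_j$ with $\chi_j\tri|g_j/g_{j-1}|$. Inserting $(g_j-cg_{j-1})^2=g_j^2(1+|c|/\chi_j)^2$ into the mutual information gives Butman's logarithmic form, and for the recursion I would rederive it cleanly from the error recursion rather than route through (\ref{inn_b_17}): writing $\chi_{t+1}^2=(g_{t+1}/g_t)^2=(\kappa_{t+1}/\kappa_t)(\Sigma_{t-1}/\Sigma_t)$ together with the telescoping identity $\Sigma_{t-1}/\Sigma_t=1+(g_t-cg_{t-1})^2\Sigma_{t-1}/K_W=1+(1+|c|/\chi_t)^2\kappa_t/K_W$ yields $\chi_{t+1}^2=\{1+(1+|c|/\chi_t)^2\kappa_t/K_W\}\kappa_{t+1}/\kappa_t$, which is Butman's (\ref{ihara_9}); the base value (\ref{ihara_8}) for $\chi_2$ follows by evaluating $\chi_2=|g_2/g_1|$ directly from $\kappa_1=g_1^2$, $\kappa_2=g_2^2\Sigma_1$, $\Sigma_1=K_W/(g_1^2+K_W)$. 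This proves part~(a).

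For part~(b) I would, on top of (a), replace the average power constraint by the pointwise one ${\bf E}\{(X_t)^2|V_0=v_0\}=\kappa$ for every $t$, i.e.\ $\kappa_t=\kappa$ and $g_1^2=\kappa$. Setting $\kappa_n=\kappa_{n-1}=\kappa$ in the recursion just obtained collapses it to $\chi_n^2=1+(1+|c|/\chi_{n-1})^2\kappa/K_W$ with $\chi_2=1+\kappa/K_W$, and the same substitution in the mutual-information sum gives $\frac1{2n}\log(1+\kappa/K_W)+\frac1{2n}\sum_{t=2}^n\log(1+(1+|c|/\chi_t)^2\kappa/K_W)$, which is precisely (\ref{ihara_8_aa})--(\ref{ihara_8_aaa}); the optimization over $\{g_t\}$ drops out because under the pointwise constraint $|g_t|$, and hence $\chi_t$, is pinned down recursively by $g_t^2\Sigma_{t-1}=\kappa$ together with (\ref{butman_cs_2_aa})--(\ref{butman_cs_2_bb}), i.e.\ by $\kappa$, $K_W$, $|c|$ alone.

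I do not expect a conceptual obstacle; the work is entirely bookkeeping. The two points that need care are: (i) tracking the several variance parameters ($K_\Theta$, $K_{V_1}$ or $K_{W_1}$, $K_{W_t}$) carried by the general formulas of Theorem~\ref{thm_batman_gen} so that they collapse to the single $K_W$ of Butman's normalization, and (ii) keeping the sign and absolute-value conventions on $\chi_j$ consistent (Theorem~\ref{thm_batman_gen} uses $\chi_t=g_t/g_{t-1}$ while Butman uses $|g_t/g_{t-1}|$, which matters precisely because the sign condition fixes the sign pattern) and checking the $t=2$ base case of the recursion separately from the general $t\geq3$ step.
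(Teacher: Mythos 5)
Your proposal is correct and takes the same route the paper intends: the paper's proof of this proposition is simply the assertion ``this is easily verified,'' and your write-up carries out exactly that verification by direct substitution of $K_\Theta=1$, $K_{W_1}=K_W$, $c_t=c$ into Theorem~\ref{thm_batman_gen}/Corollary~\ref{cor_batman_gen_a}, followed by the sign-condition simplification $1-c g_{t-1}/g_t=1+|c|/\chi_t$ and, for part (b), the collapse of the recursion under $\kappa_t=\kappa$. Your rederivation of the recursion for $\chi_{t+1}^2$ from $\Sigma_{t-1}/\Sigma_t$ is in fact cleaner than routing through (\ref{inn_b_17}) (which carries a typo, missing the square on $\chi_{t+1}$), and your observation that the pointwise constraint pins down $\chi_t$ so that no optimization remains matches the paper's own footnote.
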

\begin{proof}
This is easily verified.
\end{proof}

\bibliographystyle{IEEEtran}
\bibliography{Bibliography_capacity}

% Generated by IEEEtran.bst, version: 1.14 (2015/08/26)
\begin{thebibliography}{10}
\providecommand{\url}[1]{#1}
\csname url@samestyle\endcsname
\providecommand{\newblock}{\relax}
\providecommand{\bibinfo}[2]{#2}
\providecommand{\BIBentrySTDinterwordspacing}{\spaceskip=0pt\relax}
\providecommand{\BIBentryALTinterwordstretchfactor}{4}
\providecommand{\BIBentryALTinterwordspacing}{\spaceskip=\fontdimen2\font plus
\BIBentryALTinterwordstretchfactor\fontdimen3\font minus
  \fontdimen4\font\relax}
\providecommand{\BIBforeignlanguage}[2]{{%
\expandafter\ifx\csname l@#1\endcsname\relax
\typeout{** WARNING: IEEEtran.bst: No hyphenation pattern has been}%
\typeout{** loaded for the language `#1'. Using the pattern for}%
\typeout{** the default language instead.}%
\else
\language=\csname l@#1\endcsname
\fi
#2}}
\providecommand{\BIBdecl}{\relax}
\BIBdecl

\bibitem{butman1976}
S.~Butman, ``Linear feedback rate bounds for regressive channels,'' \emph{IEEE
  Transactions on Information Theory}, vol.~22, no.~3, pp. 363--366, 1976.

\bibitem{kim2010}
Y.-H. Kim, ``Feedback capacity of stationary {G}aussian channels,'' \emph{IEEE
  Transactions on Information Theory}, vol.~56, no.~1, pp. 57--85, 2010.

\bibitem{butman1969}
S.~Butman, ``A general formulation of linear feedback communications systems
  with solutions,'' \emph{IEEE Transactions on Information Theory}, vol.~15,
  no.~3, pp. 392--400, 1969.

\bibitem{tienan-schalkwijk1974}
J.~Tienan and J.~P.~M. Schalkwijk, ``An upper bound to the capacity of
  bandlimited {G}aussian autoregressive channel with noiseless feedback,''
  \emph{IEEE Transactions on Information Theory}, vol. IT-14, pp. 311--316, May
  1974.

\bibitem{wolfowitz1975}
J.~Wolfowitz, ``Signalling over a gaussian channel with feedback and
  autoregressive noise,'' \emph{Journal of Applied Probability}, vol.~12,
  no.~4, pp. 713--723, 1975.

\bibitem{elias1961}
P.~Elias, ``Channel capacity without coding,'' MIT Research Laboratory of
  Electronics, Quarterly Report, October 1961, also in Lectures on
  Communication System Theory, E. Baghdady, Ed., New York: McGraw Hill, 1961.

\bibitem{schalkwijk-kailath1966}
J.~P.~M. Schalkwijk and T.~Kailath, ``A coding scheme for additive noise
  channels with feedback-{I}: no bandwidth constraints,'' \emph{IEEE
  Transactions on Information Theory}, vol.~12, no.~2, pp. 172--182, {A}pril
  1966.

\bibitem{dembo1989}
A.~Dembo, ``On gaussian feedback capacity,'' \emph{IEEE Transactions on
  Information Theory}, vol.~35, no.~5, pp. 1072--1076, {S}ept. 1989.

\bibitem{ozarow1990}
L.~Ozarow, ``Upper bounds on the capacity of {G}aussian channel with
  feedback,'' \emph{IEEE Transactions on Information Theory}, vol.~36, no.~1,
  pp. 156--161, {J}anuary 1984.

\bibitem{shayevitz-feder2007}
O.~Shayevitz and M.~Feder, ``Communication with feedback via posterior
  matching,'' in \emph{IEEE International Symposium on Information Theory
  (ISIT)}, {J}une 24-29 2007, pp. 391--395.

\bibitem{shayevitz-feder2008}
------, ``The posterior matching feedback scheme: {C}apacity achieving and
  error analysis,'' in \emph{IEEE International Symposium on Information Theory
  (ISIT)}, {J}uly 6-11 2008, pp. 900--904.

\bibitem{shayevitz-feder2009}
------, ``Achieving the empirical capacity using feedback: {M}emoryless
  additive models,'' \emph{IEEE Transactions on Information Theory}, vol.~55,
  no.~3, pp. 1269--1295, {M}arch 2009.

\bibitem{yang-kavcic-tatikonda2007}
S.~Yang, A.~Kavcic, and S.~Tatikonda, ``On feedback capacity of
  power-constrained {G}aussian noise channels with memory,'' \emph{Information
  Theory, IEEE Transactions on}, vol.~53, no.~3, pp. 929--954, March 2007.

\bibitem{cover-pombra1989}
T.~Cover and S.~Pombra, ``{G}aussian feedback capacity,'' \emph{IEEE
  Transactions on Information Theory}, vol.~35, no.~1, pp. 37--43, {J}an. 1989.

\bibitem{liu-han2019}
T.~Liu and G.~Han, ``Feedback capacity of stationary {G}aussian channels
  further examined,'' \emph{IEEE Transactions on Information Theory}, vol.~64,
  no.~4, pp. 2494--2506, April 2019.

\bibitem{li-elia2019}
C.~Li and N.~Elia, ``Youla coding and computation of gaussian feedback
  capacity,'' \emph{IEEE Transactions on Information Theory}, vol.~64, no.~4,
  pp. 3197--3215, April 2019.

\bibitem{Book_analysis:1976}
W.~Rudin, \emph{Principles of mathematical analysis}, 3rd~ed.\hskip 1em plus
  0.5em minus 0.4em\relax McGraw-Hill, 1976.

\end{thebibliography}

%\newpage 

%\tableofcontents

\end{document}